\documentclass[a4paper,10pt]{article}
\usepackage[ascii]{inputenc}
\usepackage{amsmath,amssymb,amsfonts,amsthm}
\usepackage[margin=3.8cm]{geometry}
\usepackage{bbm}
\usepackage{graphicx}
\usepackage[caption=false]{subfig}
\usepackage[pdftex,bookmarks=false,colorlinks=true,linkcolor=blue,
citecolor=blue,filecolor=black,urlcolor=blue]{hyperref}

\providecommand{\ud}{\mathrm{d}}
\providecommand{\abs}[1]{\lvert#1\rvert}
\providecommand{\vect}[1]{{\boldsymbol{#1}}}
\providecommand{\R}{\mathbb{R}}
\providecommand{\C}{\mathbb{C}}

\providecommand{\N}{\mathbb{N}}

\DeclareMathOperator{\tr}{tr}

\newtheorem{definition}{Definition}

\newtheorem{proposition}[definition]{Proposition}

\title{Matrix-valued Quantum \\ Lattice Boltzmann Method}

\author{Christian B. Mendl
\footnote{
Zentrum Mathematik, Technische Universit\"at M\"unchen,
Boltzmannstra{\ss}e 3, 85747 Garching bei M\"unchen,
Germany. Email: \href{mailto:mendl@ma.tum.de}{mendl@ma.tum.de}}}

\date{January 9, 2015}

\begin{document}

\maketitle

\begin{abstract}
We devise a lattice Boltzmann method (LBM) for a matrix-valued quantum Boltzmann equation, with the classical Maxwell distribution replaced by Fermi-Dirac functions. To accommodate the spin density matrix, the distribution functions become $2 \times 2$ matrix-valued. From an analytic perspective, the efficient, commonly used BGK approximation of the collision operator is valid in the present setting. The numerical scheme could leverage the principles of LBM for simulating complex spin systems, with applications to spintronics.

\smallskip
\noindent\textbf{Keywords:} quantum Boltzmann equation, lattice Boltzmann method (LBM), spin systems, quantum transport, spintronics

\smallskip
\noindent\textbf{MSC classes:} 65Z05, 76Y05, 82D40, 81-08
\end{abstract}

\section{Introduction}

A matrix-valued quantum Boltzmann equation has recently been derived from the Hubbard model \cite{BoltzmannHubbard2012,DerivationBoltzmann2013,BoltzmannNonintegrable2013}, with $2 \times 2$ distribution functions representing the spin density matrix. It can be regarded as generalization of the scalar quantum Boltzmann equation, which traces back to the work of Nordheim \cite{Nordheim1928}, Peierls \cite{Peierls1929} and Uehling and Uhlenbeck \cite{UehlingUhlenbeck1933}. Until now the numerical simulations of the matrix-valued version have only been performed in one dimension \cite{BoltzmannHubbard2012,BoltzmannNonintegrable2013}. Here our aim is to devise a lattice Boltzmann method (LBM) for two and three dimensions.

In recent years, numerical methods for the scalar quantum Boltzmann equation have become a topic of active research. Approaches include asymptotics-preserving exponential methods \cite{HuLiPareschi2013,PareschiRussoReview2011}, Fourier representation to efficiently evaluate the collision operator combined with BKG penalization \cite{FilbertHuJin2012,HuYing2012}, and kinetic flux vector splitting schemes \cite{YangShi2006,YangHsiehShi2007,ShiHuangYang2007}. Concerning the classical Boltzmann equation, lattice Boltzmann methods are widely used for complex fluid simulations \cite{McNamaraZanetti1988,HigueraSucciBenzi1989,QianHumieresLallemand1992,HeLuoTheory1997,Succi2001}, and are particularly attractive from a computational perspective due to their straightforward parallelization. Notably, LBM schemes have recently been developed for a relativistic Boltzmann equation, either using the D3Q19 model with applications to astrophysics \cite{MendozaPRL2010,MendozaPRD2010} or a hexagonal lattice adapted to the physical lattice structure and effective Hamiltonian of graphene \cite{FDGraphene2013}. Alternatively to the present work, a more accurate but computationally also more demanding approach uses Fourier discretization to evaluatate the collision operator with spectral accuracy \cite{LuMendl2014}.

Here, we devise a LBM scheme using the common D2Q9 and D3Q19 models for the matrix-valued quantum Boltzmann equation. The method holds the promise to simulate spin-dependent electronic transport, and might be valuable in the emerging field of spintronics \cite{Spintronics2001,Khajetoorians2011,SiliconSpintronics2012}. Note that the ``lattice'' in the present study discretizes the spatial dimension, and is not related to the lattice on the quantum level of the Hubbard Hamiltonian in \cite{BoltzmannHubbard2012}. We will study the mathematical model in Sec.~\ref{sec:Model}. The numerical framework including lattice discretization, numerical integration adapted to the lattice and a polynomial approximation of Fermi-Dirac distributions is provided in Sec.~\ref{sec:Framework}.

\section{Mathematical model}
\label{sec:Model}

We investigate the following matrix-valued quantum Boltzmann equation
\begin{equation}
\label{eq:BoltzmannEquation}
\partial_t\,W(\vect{p},\vect{x},t) + \vect{p} \cdot \vect{\nabla}_{\vect{x}} W(\vect{p},\vect{x},t) = \mathcal{C}[W](\vect{p},\vect{x},t).
\end{equation}
It formally agrees with the scalar quantum Boltzmann equation \cite{Nordheim1928,Peierls1929,UehlingUhlenbeck1933}, and additionally takes the physical spin degree of freedom for spin-$\tfrac{1}{2}$ particles like electrons into account. This means that the distribution (``Wigner'') function $W(\vect{p},\vect{x},t)$ is $2 \times 2$ complex-Hermitian-valued, i.e., $W(\vect{p},\vect{x},t) \in \mathcal{M}_2$ with $\mathcal{M}_2 = \left\{A \in \C^{2\times2} \,:\, A^* = A\right\}$. Physically, $W(\vect{p},\vect{x},t)$ can be interpreted as spin density matrix distribution function of particles with momentum $\vect{p} \in \R^d$ and position $\vect{x} \in \R^d$ at time $t$. (In this paper we will consider dimensions $d = 2, 3$).

The collision operator $\mathcal{C}$ models the interaction of the particles by collisions and is likewise $2 \times 2$ matrix-valued. It acts only locally on the momentum variable $\vect{p}$, i.e., $\mathcal{C}$ is a functional on momentum distributions which does not explicitly depend on $\vect{x}$ or $t$. Specifically, the collision operator derived from the Hubbard model for electrons in the limit of weak interactions \cite{BoltzmannHubbard2012,DerivationBoltzmann2013} reads as follows: $\mathcal{C}$ splits into a ``dissipative'' and ``conservative'' part, $\mathcal{C} = \mathcal{C}_{\mathrm{d}} + \mathcal{C}_{\mathrm{c}}$. The conservative part results in local, $\vect{p}$-dependent unitary rotations. Namely, $\mathcal{C}_{\mathrm{c}}$ is a Vlasov-type commutator (locally at given $\vect{x}, t$):
\begin{equation}
\label{eq:CcHubbard}
\mathcal{C}_{\mathrm{c}}[W](\vect{p}) = - i \left[ H_\mathrm{eff}(\vect{p}), W(\vect{p}) \right],
\end{equation}
where the effective Hamiltonian $H_\mathrm{eff}(\vect{p})$ itself depends on $W$. $\mathcal{C}_{\mathrm{c}}$ is denoted ``conservative'' since it does not increase the entropy, as the entropy is invariant under unitary rotations. For the dissipative part $\mathcal{C}_{\mathrm{d}}$, we introduce the notation $\tilde{W} = \mathbbm{1} - W$, $W_i = W(\vect{p}_i,\vect{x},t)$, $\underline{\vect{p}} = \vect{p}_1 + \vect{p}_2 - \vect{p}_3 - \vect{p}_4$, $\underline{\omega} = \omega(\vect{p}_1) + \omega(\vect{p}_2) - \omega(\vect{p}_3) - \omega(\vect{p}_4)$. Here the analytic function $\omega: \R^d \to \R$ is the dispersion relation (energy). $\omega$ is assumed to be non-negative and symmetric: $\omega(\vect{p}) = \omega(\vect{-p})$. In this paper, we only consider $\omega(\vect{p}) = \frac{1}{2} \abs{\vect{p}}^2$, corresponding to fermions in the continuum. The dissipative $\mathcal{C}_{\mathrm{d}}$ models the collision of two particles with momenta $\vect{p}_1$ and $\vect{p}_2$, resulting in two outgoing particles with momenta $\vect{p}_3$ and $\vect{p}_4$. Conservation of momentum requires that $\vect{p}_1 + \vect{p}_2 = \vect{p}_3 + \vect{p}_4$, which is reflected by the $\delta$-function $\delta(\underline{\vect{p}})$ in Eq.~\eqref{eq:CdHubbard} below. Similarly, the $\delta(\underline{\omega})$ term in Eq.~\eqref{eq:CdHubbard} guarantees energy conservation. The operator $\mathcal{C}_{\mathrm{d}}$ reads
\begin{equation}
\label{eq:CdHubbard}
\mathcal{C}_\mathrm{d}[W]_1 = \pi \int_{\R^{3d}} \delta(\underline{\vect{p}}) \delta(\underline{\omega}) \big( \mathcal{A}_{\mathrm{quad}}[W]_{1234} + \mathcal{A}_{\mathrm{tr}}[W]_{1234} \big) \ud\vect{p}_2 \ud\vect{p}_3 \ud\vect{p}_4,
\end{equation}
where the index $1234$ indicates that the matrices $\mathcal{A}_{\mathrm{quad}}[W]$ and $\mathcal{A}_{\mathrm{tr}}[W]$ depend on $\vect{p}_1$, $\vect{p}_2$, $\vect{p}_3$, and $\vect{p}_4$. Explicitly
\begin{equation}
\begin{split}
\label{eq:AW_Hubbard}
\mathcal{A}_{\mathrm{quad}}[W]_{1234} &= -\tilde{W}_1 W_3 \tilde{W}_2 W_4 - W_4 \tilde{W}_2 W_3 \tilde{W}_1 + W_1 \tilde{W}_3 W_2 \tilde{W}_4 + \tilde{W}_4 W_2 \tilde{W}_3 W_1, \\
\mathcal{A}_{\mathrm{tr}}[W]_{1234} &= \big(\tilde{W}_1 W_3 + W_3 \tilde{W}_1\big) \tr[\tilde{W}_2 W_4] - \big(W_1 \tilde{W}_3 + \tilde{W}_3 W_1\big) \tr[W_2 \tilde{W}_4].
\end{split}
\end{equation}
Note that the Wigner matrices do not commute in general, i.e., $W_i W_j \neq W_j W_i$. The trace $\tr[\,\cdot\,]$ appearing in \eqref{eq:AW_Hubbard} averages the spin components. The representation \eqref{eq:AW_Hubbard} emphasizes the similarity to the scalar collision operator \cite{UehlingUhlenbeck1933}, which is recovered when all $W_i$ are proportional to the identity matrix. After expanding the expressions in \eqref{eq:AW_Hubbard}, at most cubit terms remain. We will analyze the Bhatnagar-Gross-Krook (BGK)~\cite{BGK1954,QianHumieresLallemand1992} approximation of the collision operator in section~\ref{sec:BGK} due to its relevance for numerical implementations.

The equilibrium functions $W_{\mathrm{FD}}$ satisfying $\mathcal{C}[W_{\mathrm{FD}}] = 0$ are of Fermi-Dirac type. That is, after a unitary base change,
\begin{equation}
\label{eq:WFermiDirac}
W_\mathrm{FD}(\vect{p}) =
\begin{pmatrix}
\left(\mathrm{e}^{\beta (\omega(\vect{p}-\vect{u}) - \mu_{\uparrow})} + 1\right)^{-1}& 0\\
0 & \left(\mathrm{e}^{\beta (\omega(\vect{p}-\vect{u}) - \mu_{\downarrow})} + 1\right)^{-1}\\
\end{pmatrix},
\end{equation}
where -- in physical terms -- $\vect{u} \in \R^d$ is the average velocity, $\beta = \frac{1}{k_{\mathrm{B}} T}$ the ``inverse temperature'' with $k_{\mathrm{B}}$ the Boltzmann constant and $T$ the temperature, and $\mu_\uparrow$, $\mu_\downarrow \in \R$ are the chemical potentials for the spin occupations.

\subsection{Abstract characterization of the collision operator}
\label{sec:CollisionOpAbstract}

We summarize the abstract properties of $\mathcal{C}$ \cite{BoltzmannHubbard2012}, analogous to \cite{BardosGolseLevermore1991}. First, $\mathcal{C}$ should act locally as mentioned above, i.e., only on the momentum variable $\vect{p}$. Second, $\mathcal{C}$ must be $\mathrm{SU}(2)$-invariant, that is, $\mathcal{C}[U^*\,W\,U] = U^* \mathcal{C}[W] U$ for arbitrary, constant $U \in \mathrm{SU}(2)$. The collision operator must propagate the Fermi property: the Fermi property is satisfied at time $t$ and position $\vect{x}$ if the two eigenvalues of $W(\vect{p},\vect{x},t)$ are in the interval $[0,1]$ for all $\vect{p}$. If the Fermi property holds at some initial time $t = t_0$, it must be preserved at all later times $t \ge t_0$ under the time evolution of the Boltzmann equation. Furthermore, the collision operator $\mathcal{C}$ must adhere to the following density, momentum and energy conservation laws ($\vect{x}, t$ held fixed):
\begin{equation}
\label{eq:ConservationLaws}
\int_{\R^d} \mathcal{C}[W](\vect{p}) \,\ud\vect{p} = 0, \quad
\int_{\R^d} \vect{p} \tr[\mathcal{C}[W](\vect{p})] \,\ud\vect{p} = 0, \quad
\int_{\R^d} \tfrac{1}{2} \abs{\vect{p}}^2 \tr[\mathcal{C}[W](\vect{p})] \,\ud\vect{p} = 0.
\end{equation}
The corresponding fluid dynamic moments, i.e., density $\rho(\vect{x},t)$, velocity $\vect{u}(\vect{x},t)$ and internal energy $\varepsilon(\vect{x},t)$ are locally (at given $\vect{x},t$) defined by the relations
\begin{align}
\label{eq:AvrDensity}
\rho &= \int_{\R^d} W(\vect{p}) \,\ud\vect{p},\\
\label{eq:AvrVelocity}
\tr[\rho]\,\vect{u} &= \int_{\R^d} \vect{p} \tr[W(\vect{p})] \,\ud\vect{p},\\
\label{eq:AvrEnergy}
\tr[\rho]\,\varepsilon &= \int_{\R^d} \tfrac{1}{2} \abs{\vect{p}}^2 \tr[W(\vect{p} + \vect{u})] \,\ud\vect{p} = \int_{\R^d} \tfrac{1}{2} \abs{\vect{p}}^2 \tr[W(\vect{p})] \,\ud\vect{p} - \tr[\rho]\,\tfrac{1}{2} \abs{\vect{u}}^2.
\end{align}
Note that $\rho$ is $2 \times 2$ matrix-valued. The shift by $\vect{u}$ in Eq.~\eqref{eq:AvrEnergy} relocates the integration variable into the moving frame of $W$. The total energy (including the kinetic part resulting from the overall motion with velocity $\vect{u}$) equals $\varepsilon + \frac{1}{2} \abs{\vect{u}}^2$. When additionally defining the stress tensor $R$ and heat flux $\vect{q}$ by
\begin{equation}
R = \int_{\R^d} \vect{p} \otimes \vect{p} \tr[W(\vect{p} + \vect{u})] \,\ud\vect{p}, \qquad \vect{q} = \int_{\R^d} \tfrac{1}{2} \abs{\vect{p}}^2\,\vect{p} \tr[W(\vect{p} + \vect{u})] \,\ud\vect{p},
\end{equation}
then the Boltzmann equation~\eqref{eq:BoltzmannEquation} together with \eqref{eq:ConservationLaws} immediately leads to the following local conservation laws (as in the scalar case \cite{HuLiPareschi2013} after replacing $\rho$ by $\tr[\rho]$)
\begin{equation}
\label{eq:hydrodynamic}
\begin{split}
&\partial_t \tr[\rho] + \vect{\nabla}_{\vect{x}} \cdot (\tr[\rho]\,\vect{u}) = 0, \\[2pt]
&\partial_t (\tr[\rho]\,\vect{u}) + \vect{\nabla}_{\vect{x}} \cdot (R + \tr[\rho]\, \vect{u} \otimes \vect{u}) = 0, \\[2pt]
&\partial_t \big(\tr[\rho] \big(\varepsilon + \tfrac{1}{2} \abs{\vect{u}}^2 \big) \big) + \vect{\nabla}_{\vect{x}} \cdot \big(\vect{q} + R \vect{u} + \tr[\rho]\big(\varepsilon + \tfrac{1}{2} \abs{\vect{u}}^2 \big)\vect{u} \big) = 0.
\end{split}
\end{equation}
Thus these conservation laws are insensitive to the matrix structure of $\rho$.

Finally, the H-theorem should be satisfied, meaning that the entropy cannot decrease under the time evolution of the Boltzmann equation. The entropy of the state $W$ (locally at $\vect{x}, t$) is defined as
\begin{equation}
S[W] = - \int_{\R^d} \tr\!\big[ W(\vect{p}) \log W(\vect{p}) + \tilde{W}(\vect{p}) \log \tilde{W}(\vect{p}) \big] \ud \vect{p},
\end{equation}
where the (natural) logarithm acts on the eigenvalues of its argument. Hence the entropy production is given by
\begin{equation}
\label{eq:sigmaW}
\sigma[W] = \frac{\ud}{\ud t} S[W] = -\int_{\R^d} \tr\!\big[ \big(\log W(\vect{p}) - \log \tilde{W}(\vect{p})\big) \mathcal{C}[W](\vect{p}) \big] \ud \vect{p}.
\end{equation}
The H-theorem asserts that
\begin{equation}
\label{eq:HTheorem}
\sigma[W] \geq 0 \qquad \text{for all } W \text{ with eigenvalues } 0 \leq W \leq 1.
\end{equation}
This property holds indeed for the collision operator \eqref{eq:CdHubbard} derived from the Hubbard model, and we presuppose the same for any abstract collision operator.

\subsection{Fermi-Dirac equilibrium distribution function}

All equilibrium functions satisfying $\mathcal{C}[W] = 0$ are precisely of the form $U\,W_{\mathrm{FD}}(\vect{p})\,U^*$ with $U \in \mathrm{SU}(2)$ (independent of $\vect{p}$) and $W_{\mathrm{FD}}$ defined in Eq.~\eqref{eq:WFermiDirac}. Moreover, the moments suffice to determine $W_{\mathrm{FD}}$, as stated in the following proposition. The proof proceeds along similar lines as in Ref.~\cite{BoltzmannHubbard2012}, using a Legendre transformation.
\begin{proposition}
\label{prop:MapMomentsParams}
The moments of the Fermi-Dirac distribution $W_{\mathrm{FD}}$ defined via Eq.~\eqref{eq:AvrDensity}, \eqref{eq:AvrVelocity} and \eqref{eq:AvrEnergy} uniquely determine the parameters $(\beta, \mu_{\uparrow}, \mu_{\downarrow}, \vect{u})$.
\end{proposition}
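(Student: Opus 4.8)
The plan is to recast the four scalar/vector parameters as the natural parameters conjugate to the conserved moments, to exhibit the moment map as the gradient of a strictly convex potential, and to conclude via injectivity of the gradient of a strictly convex function. First I would expand $\omega(\vect{p}-\vect{u}) = \tfrac{1}{2}\abs{\vect{p}}^2 - \vect{p}\cdot\vect{u} + \tfrac{1}{2}\abs{\vect{u}}^2$ so that each diagonal entry of $W_{\mathrm{FD}}$ becomes $\big(\mathrm{e}^{\lambda_0\,\omega(\vect{p}) - \vect{\lambda}_1\cdot\vect{p} - \kappa_s} + 1\big)^{-1}$ with $\lambda_0 = \beta$, $\vect{\lambda}_1 = \beta\,\vect{u}$ and $\kappa_s = \beta\mu_s - \tfrac{1}{2}\beta\abs{\vect{u}}^2$ for $s \in \{\uparrow,\downarrow\}$. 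For $\beta > 0$ the assignment $(\beta,\mu_\uparrow,\mu_\downarrow,\vect{u}) \mapsto (\lambda_0,\vect{\lambda}_1,\kappa_\uparrow,\kappa_\downarrow)$ is an explicit bijection, so it is enough to show that the moments determine these canonical parameters.

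Next I would introduce the log-partition potential
\[
\Psi(\lambda_0,\vect{\lambda}_1,\kappa_\uparrow,\kappa_\downarrow) = \int_{\R^d}\sum_{s}\log\!\big(1 + \mathrm{e}^{-\lambda_0\,\omega(\vect{p}) + \vect{\lambda}_1\cdot\vect{p} + \kappa_s}\big)\,\ud\vect{p},
\]
which is finite and smooth on the domain $\{\lambda_0 > 0\}$ since the integrand decays like a Gaussian (using $\log(1+x)\sim x$ at infinity). Differentiating under the integral sign, I expect $\partial\Psi/\partial\kappa_s$ to reproduce the diagonal entries $\rho_s$ of $\rho$, $\vect{\nabla}_{\vect{\lambda}_1}\Psi$ to equal the momentum $\tr[\rho]\,\vect{u}$, and $\partial\Psi/\partial\lambda_0$ to equal minus the total energy $\tr[\rho]\big(\varepsilon + \tfrac{1}{2}\abs{\vect{u}}^2\big)$ in view of Eq.~\eqref{eq:AvrEnergy}. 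Thus $\vect{\nabla}\Psi$ is precisely the vector of conserved moments, and the moments of the proposition, namely $\rho_\uparrow,\rho_\downarrow,\vect{u},\varepsilon$, are related to this gradient by an invertible triangular change of variables once $\tr[\rho] = \rho_\uparrow + \rho_\downarrow$ is read off.

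The crux is to establish that $\Psi$ is strictly convex, equivalently that its Hessian is positive definite. Writing $g_s(\vect{p}) = -\lambda_0\,\omega(\vect{p}) + \vect{\lambda}_1\cdot\vect{p} + \kappa_s$, which is affine in the parameters with coefficient vector $\vect{\phi}_s(\vect{p})$ whose entries are $-\omega(\vect{p})$, the components of $\vect{p}$, and the Kronecker symbols $\delta_{s\uparrow},\delta_{s\downarrow}$, the Hessian takes the form $\int_{\R^d}\sum_s w_s(\vect{p})\,\vect{\phi}_s(\vect{p})\otimes\vect{\phi}_s(\vect{p})\,\ud\vect{p}$ with strictly positive weights $w_s = W_{\mathrm{FD},ss}\big(1 - W_{\mathrm{FD},ss}\big)$. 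This is manifestly positive semidefinite, and the main obstacle is to exclude a nontrivial null vector $\vect{v}$: such a vector forces $\vect{\phi}_s(\vect{p})\cdot\vect{v} = 0$ for almost every $\vect{p}$ and both spins, i.e.\ a linear dependence among the collision invariants $1$, $p_1,\dots,p_d$ and $\abs{\vect{p}}^2$ on each spin channel. Since these functions are linearly independent on $\R^d$, we conclude $\vect{v} = 0$, so $\Psi$ is strictly convex and $\vect{\nabla}\Psi$ is injective. Hence the moments determine $(\lambda_0,\vect{\lambda}_1,\kappa_\uparrow,\kappa_\downarrow)$ and therefore $(\beta,\mu_\uparrow,\mu_\downarrow,\vect{u})$ uniquely. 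I expect the only delicate points to be the convergence and differentiability of $\Psi$ for $\beta > 0$ and the clean reduction of positive definiteness to this linear independence.
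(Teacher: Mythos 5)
Your proof is correct, and it follows the same core strategy as the paper -- identifying the moment map as the gradient of a strictly convex log-partition (``free energy'') functional and invoking injectivity of the gradient -- but it differs in two substantive ways. First, the paper disposes of $\vect{u}$ before setting up the Legendre transform: by the change of variables $\vect{p} \to \vect{p} + \vect{u}$ and the even symmetry of $W_{\mathrm{FD}}$, the velocity moment \eqref{eq:AvrVelocity} directly returns the shift parameter, so one may pass to the rest frame and work with the three-parameter potential $H(\beta,\nu_\uparrow,\nu_\downarrow)$ of Eq.~\eqref{eq:FreeEnergy}. You instead absorb $\vect{u}$ into the natural parameters $(\lambda_0,\vect{\lambda}_1,\kappa_\uparrow,\kappa_\downarrow)$ and run a single $(d+3)$-dimensional convexity argument; this is slightly heavier but more systematic, and it makes transparent that the conjugate moments are exactly the collision invariants of \eqref{eq:ConservationLaws} (note your $\partial_{\lambda_0}\Psi$ produces the \emph{total} energy $\tr[\rho](\varepsilon + \tfrac12\abs{\vect{u}}^2)$, consistent with \eqref{eq:AvrEnergy}, so the triangular recovery of the gradient from $(\rho_\uparrow,\rho_\downarrow,\vect{u},\varepsilon)$ is needed and you correctly supply it). Second, the paper simply asserts ``the uniqueness of the map follows from the strict convexity of $H$''; your rank-one decomposition of the Hessian with weights $W_{\mathrm{FD},ss}(1-W_{\mathrm{FD},ss})>0$ and the reduction of positive definiteness to the linear independence of $1, p_1,\dots,p_d, \abs{\vect{p}}^2$ actually proves that assertion, which is the one detail the paper leaves to the reader. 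Your version is therefore a more complete, if less economical, rendering of the same argument.
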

\begin{proof}
First note that $\vect{u}$ in Eq.~\eqref{eq:WFermiDirac} coincides with $\vect{u}$ in Eq.~\eqref{eq:AvrVelocity}, which follows by a change of variables $\vect{p} \to \vect{p} + \vect{u}$ and using the even symmetry of the Fermi-Dirac distribution. Thus we can without loss of generality assume that $\vect{u} = 0$ after a shift into the moving frame. In the following, it will be advantageous to work with $\nu_\sigma = \beta \mu_\sigma$ instead of $\mu_\sigma$ ($\sigma \in \{\uparrow, \downarrow\}$). The map between $(\beta, \nu_{\uparrow}, \nu_{\downarrow})$ and the averages $(\rho_{\mathrm{FD}}, \varepsilon_{\mathrm{FD}})$ can be regarded as Legendre transformation: define a ``free energy'' by
\begin{equation}
\label{eq:FreeEnergy}
H(\beta,\nu_{\uparrow},\nu_{\downarrow}) = \int_{\R^d} \sum_{\sigma \in \{\uparrow, \downarrow\}} \log\!\left(1 + \mathrm{e}^{\nu_\sigma - \beta \omega(\vect{p})}\right) \ud\vect{p}.
\end{equation}
The integrand tends exponentially to zero as $\abs{\vect{p}} \to \infty$. Assuming that the order of differentiation and integration can be interchanged, a short calculation of the derivatives of $H$ results in
\begin{equation}
\partial_\beta H = - \int_{\R^d} \omega(\vect{p}) \sum_{\sigma \in \{\uparrow, \downarrow\}} \left(\mathrm{e}^{\beta\omega(\vect{p}) - \nu_{\sigma}} + 1\right)^{-1} \ud\vect{p} = - \tr[\rho_\mathrm{FD}] \, \varepsilon_{\mathrm{FD}}
\end{equation}
according to the definition~\eqref{eq:AvrEnergy}. Also,
\begin{equation}
\partial_{\nu_\sigma} H = \int_{\R^d} \left(\mathrm{e}^{\beta\omega(\vect{p}) - \nu_{\sigma}} + 1\right)^{-1} \ud\vect{p} = \rho_{\mathrm{FD},\sigma},
\end{equation}
where $\rho_{\mathrm{FD},\sigma}$ denotes the diagonal matrix entry of $\rho_{\mathrm{FD}}$ corresponding to $\sigma$. The uniqueness of the map follows from the strict convexity of $H$ in its arguments.
\end{proof}

We remark that for a general equilibrium state $U\,W_{\mathrm{FD}}(\vect{p})\,U^*$, the unitary matrix $U \in \mathrm{SU}(2)$ can be recovered from the average density $\rho \in \mathcal{M}_2$ by diagonalizing $\rho$.

\medskip

Let us briefly discuss the implications of proposition~\ref{prop:MapMomentsParams} for the homogeneous case, i.e., $W(\vect{p},\vect{x},t) = W(\vect{p},t)$ independent of $\vect{x}$. Then the transport term in the Boltzmann equation~\eqref{eq:BoltzmannEquation} disappears, and the moments \eqref{eq:AvrDensity}, \eqref{eq:AvrVelocity}, \eqref{eq:AvrEnergy} are globally conserved. In this case, one can actually calculate the equilibrium distribution which the current state $W(\vect{p},t)$ will eventually converge to, even without solving the Boltzmann time evolution.

\medskip

The average density of the Fermi-Dirac distribution is (see e.g.~\cite{Pathria1997})
\begin{equation}
\label{eq:rhoFDstdOmega}
\rho_{\mathrm{FD}}
= \int_{\R^d} \left(\mathrm{e}^{\beta \left(\frac{1}{2} \abs{\vect{p}-\vect{u}}^2 - \mu_{\sigma}\right)} + 1\right)^{-1}_{\sigma \sigma} \ud\vect{p}
= \begin{pmatrix} n_{d,\beta,\mu_{\uparrow}} & 0 \\ 0 & n_{d,\beta,\mu_{\downarrow}} \end{pmatrix} \qquad \end{equation}
with
\begin{equation}
\label{eq:nFDstdOmega}
n_{d,\beta,\mu} = \left(\frac{2 \pi}{\beta}\right)^{\frac{d}{2}} F_{\frac{d}{2}-1}(\beta \mu),
\end{equation}
and the internal energy
\begin{equation}
\label{eq:energyFDstdOmega}
\varepsilon_{\mathrm{FD}}
= \frac{1}{\tr[\rho_{\mathrm{FD}}]} \int_{\R^d} \tfrac{1}{2}\abs{\vect{p}}^2 \sum_{\sigma \in \{\uparrow, \downarrow\}} \left(\mathrm{e}^{\beta \left(\frac{1}{2} \abs{\vect{p}}^2 - \mu_{\sigma}\right)} + 1\right)^{-1} \ud\vect{p}
= \frac{d}{2} \frac{1}{\beta} \frac{\sum_{\sigma \in \{\uparrow, \downarrow\}} F_{\frac{d}{2}}(\beta \mu_{\sigma})}{\sum_{\sigma \in \{\uparrow, \downarrow\}} F_{\frac{d}{2}-1}(\beta \mu_{\sigma})}.
\end{equation}
Due to the even symmetry of the Fermi-Dirac distribution, $R_{\mathrm{FD}} = \frac{2}{d} \tr[\rho] \varepsilon_{\mathrm{FD}}\,\mathbbm{1}$ and $\vect{q}_{\mathrm{FD}} = 0$. In \eqref{eq:nFDstdOmega} and \eqref{eq:energyFDstdOmega}, $F_{k}(x)$ is the complete Fermi-Dirac integral
\begin{equation*}
F_{k}(x) = \frac{1}{\Gamma(k+1)} \int_0^{\infty} \frac{y^k}{\mathrm{e}^{y - x} + 1} \ud y,
\end{equation*}
which is valid for all $k \in \C$ by analytic continuation to negative integers. For example, $F_{-1}(x) = (1+\mathrm{e}^{-x})^{-1}$. The function $F_{k}(x)$ is related to the polylogarithm function by $F_{k}(x) = -\mathrm{Li}_{k+1}\!\left(-\mathrm{e}^x\right)$ and obeys the following relation: $\frac{\ud}{\ud x} F_{k}(x) = F_{k-1}(x)$. For $k = 0$, one obtains $F_0(x) = \log(1 + \mathrm{e}^x)$. Concerning precise and efficient numerical evaluation of $F_{k}(x)$ for half-integer $k$, see Ref.~\cite{MacLeod1998}.


\medskip

In the limit where $W$ approaches a Fermi-Dirac equilibrium function, we may substitute the above formulas for the stress tensor and heat flux into the system \eqref{eq:hydrodynamic} to obtain the following closed system of equations as hydrodynamic limit, analogous to \cite{HuLiPareschi2013}:
\begin{equation}\label{eq:Euler}
\begin{split}
&\partial_t \tr[\rho] + \vect{\nabla}_{\vect{x}} \cdot (\tr[\rho]\,\vect{u}) = 0, \\[2pt]
&\partial_t (\tr[\rho]\,\vect{u}) + \vect{\nabla}_{\vect{x}} \cdot \big( \tr[\rho] \big( \tfrac{2}{d}\,\varepsilon\,\mathbbm{1} + \vect{u} \otimes \vect{u} \big) \big) = 0, \\[2pt]
&\partial_t \big(\tr[\rho] \big(\varepsilon + \tfrac{1}{2} \abs{\vect{u}}^2\big) \big) + \vect{\nabla}_{\vect{x}} \cdot \big(\tr[\rho]\big(\tfrac{2}{d}\,\varepsilon + \varepsilon + \tfrac{1}{2} \abs{\vect{u}}^2 \big)\vect{u} \big) = 0.
\end{split}
\end{equation}
These are precisely the classical Euler equations (with $\tr[\rho]$ instead of $\rho$) when identifying the pressure as $P = \frac{2}{d} \tr[\rho]\varepsilon$. Comparing with the equation of state for an ideal polytropic gas, the adiabatic exponent is $\gamma = 1 + \frac{2}{d}$, corresponding to a monatomic gas.

\subsection{The BGK collision operator}
\label{sec:BGK}

The Bhatnagar-Gross-Krook (BGK)~\cite{BGK1954} approximation of the collision operator is widely used in LBM schemes \cite{QianHumieresLallemand1992}:
\begin{equation}
\label{eq:C_BGK}
\mathcal{C}_{\mathrm{BGK}}[W](\vect{p}) = \frac{1}{\tau} \big(W^{\mathrm{(eq)}}(\vect{p}) - W(\vect{p}) \big).
\end{equation}
Here $\tau$ is the relaxation time and $W^{\mathrm{(eq)}}$ is the local (at fixed $\vect{x},t$) equilibrium distribution function corresponding to $W$, that is, $W^{\mathrm{(eq)}}$ has the same momentum-averaged density, velocity and energy as $W$ at ($\vect{x},t$). The exponential convergence to equilibrium effected by $\mathcal{C}_{\mathrm{BGK}}$ agrees with the numerical simulations in \cite{BoltzmannHubbard2012}. In our case, the general form of the equilibrium state is
\begin{equation}
\label{eq:WeqFD}
W^{\mathrm{(eq)}}(\vect{p}) = U\,W_{\mathrm{FD}}(\vect{p})\,U^*,
\end{equation}
where the unitary matrix $U \in \mathrm{SU}(2)$ encodes the eigenbasis of the average density such that $U^* \left(\int_{\R^d} W(\vect{p}) \ud\vect{p} \right) U$ is a diagonal matrix. 

We still have to ensure that $\mathcal{C}_{\mathrm{BGK}}$ is indeed valid in the present setting where $W(\vect{p})$ is matrix-valued, i.e., that all properties listed in section~\ref{sec:CollisionOpAbstract} are satisfied. We remark that the following analysis is concerned with the exact solution of the Boltzmann equation using the collision operator \eqref{eq:C_BGK}. Whether all properties (in particular the H-theorem) are reflected within a numerical framework is an issue of its own. The conservation of density, momentum and energy is satisfied by construction, but a more subtle point is the Fermi-property. Our argument proceeds along similar lines as in \cite{BoltzmannHubbard2012}. We denote the eigenvalues of $W(\vect{p})$ at a time point $t$ by $\lambda_\sigma(\vect{p})$, without loss of generality $\lambda_{\uparrow}(\vect{p}) \ge \lambda_{\downarrow}(\vect{p})$. Let us assume that $W$ has an eigenvalue $0$ at $\vect{p}_0$, i.e., $\lambda_{\downarrow}(\vect{p}_0) = 0$ with corresponding eigenvector $\phi_{\downarrow} \in \C^2$. Then by first order perturbation theory,
\begin{equation}
\begin{split}
\frac{\ud}{\ud t} \lambda_{\downarrow}(\vect{p}_0)
= \left\langle \phi_{\downarrow} \,\vert\, \mathcal{C}_{\mathrm{BGK}}[W](\vect{p}_0) \,\vert\, \phi_{\downarrow}\right\rangle
&= \frac{1}{\tau} \big\langle \phi_{\downarrow} \,\vert\, W^{\mathrm{(eq)}}(\vect{p}_0) -  W(\vect{p}_0) \,\vert\, \phi_{\downarrow}\big\rangle \\
&= \frac{1}{\tau} \left\langle \phi_{\downarrow} \,\vert\, U\,W_{\mathrm{FD}}(\vect{p}_0)\,U^* \,\vert\, \phi_{\downarrow}\right\rangle \ge 0.
\end{split}
\end{equation}
The last inequality holds since the Fermi-Dirac distribution $W_{\mathrm{FD}}(\vect{p}_0)$ is a positive-definite matrix with eigenvalues in the interval $(0,1)$. Thus, $\mathcal{C}_{\mathrm{BGK}}$ prevents $\lambda_{\downarrow}(\vect{p}_0)$ from becoming negative. By a similar argument, if $\lambda_{\uparrow}(\vect{p}_0) = 1$, one obtains $\frac{\ud}{\ud t} \lambda_{\uparrow}(\vect{p}_0) \le 0$, that is, the time evolution prevents eigenvalues from exceeding $1$.

\begin{proposition}
The H-theorem, Eq.~\eqref{eq:HTheorem}, holds for the BGK collision operator defined in Eq.~\eqref{eq:C_BGK}. Furthermore, the entropy production $\sigma[W]$ is zero if and only if $W(\vect{p}) = W^{\mathrm{(eq)}}(\vect{p})$.
\end{proposition}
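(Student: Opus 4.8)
The plan is to insert the BGK operator \eqref{eq:C_BGK} into the entropy-production formula \eqref{eq:sigmaW}, giving
\[
\sigma[W] = -\frac{1}{\tau}\int_{\R^d}\tr\big[(\log W(\vect{p}) - \log\tilde{W}(\vect{p}))\,(W^{\mathrm{(eq)}}(\vect{p}) - W(\vect{p}))\big]\,\ud\vect{p},
\]
and to work with the matrix ``logit'' $\ell(W):=\log W - \log\tilde{W}$, acting on the eigenvalues via $\lambda\mapsto\log\tfrac{\lambda}{1-\lambda}$. The first key step is to recognize that, evaluated on the equilibrium \eqref{eq:WeqFD}, this quantity is a collision invariant. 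A short computation on the diagonal Fermi--Dirac entries yields $\log\tfrac{\lambda}{1-\lambda} = -\beta(\omega(\vect{p}-\vect{u})-\mu_\sigma)$, hence
\[
\ell\big(W^{\mathrm{(eq)}}(\vect{p})\big) = -\beta\,\omega(\vect{p}-\vect{u})\,\mathbbm{1} + \beta\,U\begin{pmatrix}\mu_\uparrow & 0\\ 0 & \mu_\downarrow\end{pmatrix}U^*,
\]
that is, a scalar function of $\vect{p}$ times $\mathbbm{1}$ plus a constant (momentum-independent) Hermitian matrix.

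The second step exploits that $W^{\mathrm{(eq)}}$ shares with $W$ not only $\tr[\rho]$, $\vect{u}$ and $\varepsilon$ but the full matrix density, so that $\int_{\R^d}(W^{\mathrm{(eq)}}-W)\,\ud\vect{p}=0$ as a matrix and, via $\omega(\vect{p}-\vect{u})=\tfrac12\abs{\vect{p}}^2-\vect{p}\cdot\vect{u}+\tfrac12\abs{\vect{u}}^2$ together with the conserved density/momentum/energy moments, $\int_{\R^d}\omega(\vect{p}-\vect{u})\,\tr[W^{\mathrm{(eq)}}-W]\,\ud\vect{p}=0$. Pairing these with the two pieces of $\ell(W^{\mathrm{(eq)}})$ shows that $\int_{\R^d}\tr[\ell(W^{\mathrm{(eq)}})(W^{\mathrm{(eq)}}-W)]\,\ud\vect{p}=0$. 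Subtracting this vanishing term symmetrizes the entropy production into
\[
\sigma[W]=\frac{1}{\tau}\int_{\R^d}\tr\big[(\ell(W)-\ell(W^{\mathrm{(eq)}}))\,(W-W^{\mathrm{(eq)}})\big]\,\ud\vect{p}.
\]

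The third step is the pointwise nonnegativity of the integrand: for Hermitian $A,B$ with spectra in $(0,1)$ one has $\tr[(\ell(A)-\ell(B))(A-B)]\ge0$, with equality iff $A=B$. I would establish this with the Daleckii--Krein formula: writing $D=A-B$ and letting $x_i(t)$ be the eigenvalues of $B+tD$,
\[
\tr\big[(\ell(A)-\ell(B))(A-B)\big]=\int_0^1\sum_{i,j}\ell^{[1]}\big(x_i(t),x_j(t)\big)\,\abs{D_{ij}(t)}^2\,\ud t,
\]
where $\ell^{[1]}$ is the divided difference of $\ell$ and $D_{ij}(t)$ are the entries of $D$ in the eigenbasis of $B+tD$. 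Since $\ell$ is strictly increasing on $(0,1)$ (indeed $\ell'(\lambda)=\tfrac{1}{\lambda(1-\lambda)}>0$), every divided difference is strictly positive, so the integrand is $\ge0$ and vanishes only when $D=0$. Equivalently, $\ell$ is the gradient of the strictly convex matrix entropy $W\mapsto\tr[W\log W+\tilde{W}\log\tilde{W}]$, and strict monotonicity of the gradient yields the same conclusion. Combining the three steps gives $\sigma[W]\ge0$, and $\sigma[W]=0$ forces $W(\vect{p})=W^{\mathrm{(eq)}}(\vect{p})$ for almost every $\vect{p}$.

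The main obstacle is precisely step three: because $W(\vect{p})$ and $W^{\mathrm{(eq)}}(\vect{p})$ do not commute, $\ell(W)-\ell(W^{\mathrm{(eq)}})$ and $W-W^{\mathrm{(eq)}}$ cannot be simultaneously diagonalized, and the naive scalar argument fails; the divided-difference (or convexity-of-trace-entropy) machinery is what rescues the noncommutative case. A secondary subtlety is the boundary of the spectrum: where an eigenvalue of $W$ reaches $0$ or $1$, $\ell(W)$ is singular, but there the corresponding contribution to $\sigma[W]$ carries the favorable sign (it tends to $+\infty$), so the inequality $\sigma[W]\ge0$ is preserved; the equality analysis then takes place in the interior $0<W<1$, with the boundary handled by a limiting argument.
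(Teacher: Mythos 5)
Your proof is correct, and its skeleton (subtract the equilibrium ``logit'' as a collision invariant, symmetrize the entropy production into $\frac{1}{\tau}\int\tr[(\ell(W)-\ell(W^{\mathrm{(eq)}}))(W-W^{\mathrm{(eq)}})]\,\ud\vect{p}$, then prove pointwise nonnegativity of the integrand) matches the paper's. The difference lies in the crucial noncommutative step. The paper rewrites the pointwise integrand as the symmetrized quantum relative entropy $s(W\parallel W^{\mathrm{(eq)}})+s(W^{\mathrm{(eq)}}\parallel W)+s(\tilde{W}\parallel\tilde{W}^{\mathrm{(eq)}})+s(\tilde{W}^{\mathrm{(eq)}}\parallel\tilde{W})$ and invokes Klein's inequality, which immediately gives both the sign and the equality case $W=W^{\mathrm{(eq)}}$. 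You instead prove monotonicity of the map $A\mapsto\ell(A)$ directly, via the Daleckii--Krein divided-difference formula (equivalently, strict monotonicity of the gradient of the strictly convex trace functional $\tr[W\log W+\tilde{W}\log\tilde{W}]$). This is precisely the alternative the paper sketches in the remark after its proof --- ``an explicit calculation parameterizing the eigenbasis'' to extend the scalar inequality $(h'(x)-h'(y))(y-x)\ge 0$ to noncommuting matrices --- and your divided-difference argument is a clean, rigorous way to carry that out. What each approach buys: Klein's inequality is a one-line citation and makes the information-theoretic meaning of the integrand transparent, while your route is self-contained, generalizes to any strictly increasing $\ell$ regardless of whether it arises from an entropy, and you additionally address the boundary case of eigenvalues at $0$ or $1$, which the paper passes over. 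One minor bookkeeping point: where the paper justifies the vanishing of the cross term via the abstract conservation laws \eqref{eq:ConservationLaws} applied to $\mathcal{C}[W]$, you use the equivalent fact that $W^{\mathrm{(eq)}}$ matches the full matrix density and the scalar momentum and energy moments of $W$; these are the same identities.
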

\begin{proof}
Without loss of generality we can assume that the unitary matrix $U$ in Eq.~\eqref{eq:WeqFD} is the identity matrix, since the following arguments are invariant under global unitary rotations. A short calculation shows that
\begin{equation}
\log W^{\mathrm{(eq)}}(\vect{p}) - \log \tilde{W}^{\mathrm{(eq)}}(\vect{p}) = - \begin{pmatrix}\beta (\omega(\vect{p}-\vect{u}) - \mu_{\uparrow}) & 0\\ 0 & \beta (\omega(\vect{p}-\vect{u}) - \mu_{\downarrow})\end{pmatrix}.
\end{equation}
Together with the conservation properties \eqref{eq:ConservationLaws}, it follows that
\begin{equation}
\label{eq:dhWeq}
-\int_{\R^d} \tr\!\big[ \big(\log W^{\mathrm{(eq)}}(\vect{p}) - \log \tilde{W}^{\mathrm{(eq)}}(\vect{p})\big) \mathcal{C}[W](\vect{p}) \big] \ud \vect{p} = 0.
\end{equation}
Define the ``binary entropy function'' $h(x) = - x \log(x) - (1-x)\log(1-x)$ for $x \in (0,1)$, with derivative $h'(x) = -\log(x) + \log(1-x)$. Together with Eq.~\eqref{eq:dhWeq}, the entropy production can then be written as
\begin{equation}
\label{eq:sigmaWeq}
\sigma[W] = \int_{\R^d} \tr\!\big[\big( h'(W(\vect{p})) - h'(W^{\mathrm{(eq)}}(\vect{p}))\big) \mathcal{C}[W](\vect{p}) \big] \ud \vect{p},
\end{equation}
where $h'$ acts on the eigenvalues of its argument. Note that Eq.~\eqref{eq:sigmaWeq} holds for any admissible collision operator $\mathcal{C}$. Specifically for $\mathcal{C} = \mathcal{C}_{\mathrm{BGK}}$, we obtain
\begin{equation}
\label{eq:sigmaW_BGK}
\sigma[W] = \frac{1}{\tau} \int_{\R^d} \tr\!\big[\big( h'(W(\vect{p})) - h'(W^{\mathrm{(eq)}}(\vect{p}))\big) \big(W^{\mathrm{(eq)}}(\vect{p}) - W(\vect{p})\big) \big] \ud \vect{p}.
\end{equation}
To show that the integrand is pointwise (for each $\vect{p}$) non-negative, we use the \emph{quantum relative entropy} defined as $s(A \parallel B) = \tr[A \log A - A \log B]$ for positive-definite matrices $A, B$. By Klein's inequality, $s(A \parallel B) \ge 0$, with equality if and only if $A = B$. A short calculation shows that the integrand can be rewritten in terms of the relative entropy (pointwise in $\vect{p}$) as
\begin{equation}
s\big(W \parallel W^{\mathrm{(eq)}}\big) + s\big(W^{\mathrm{(eq)}} \parallel W\big) + s\big(\tilde{W} \parallel \tilde{W}^{\mathrm{(eq)}}\big) + s\big(\tilde{W} \parallel \tilde{W}^{\mathrm{(eq)}}\big) \ge 0.
\end{equation}
In particular, $\sigma[W] = 0$ can only hold if $W(\vect{p}) = W^{\mathrm{(eq)}}(\vect{p})$ for all $\vect{p}$.
\end{proof}

As a remark, an alternative proof for the non-negativity of the integrand in \eqref{eq:sigmaW_BGK} could rely on the argument that $(h'(x) - h'(y))(y - x) \ge 0$ for all $x, y \in (0,1)$ since $h'$ is a strictly decreasing function. An explicit calculation parameterizing the eigenbasis of $W(\vect{p}))$ could handle the fact that $W(\vect{p}))$ is not diagonal in general.

\section{Numerical framework and discretization}
\label{sec:Framework}

The starting point of LBM is a discretization of the position variable $\vect{x}$ by a uniform grid of cells denoted $\Lambda$, and of the momentum variable $\vect{p}$ by a small number of ``velocity vectors'' pointing from their current grid cell to neighboring grid cells, as illustrated in Fig.~\ref{fig:LBmodels}.
\begin{figure}[!ht]
\centering
\includegraphics[width=0.7\textwidth]{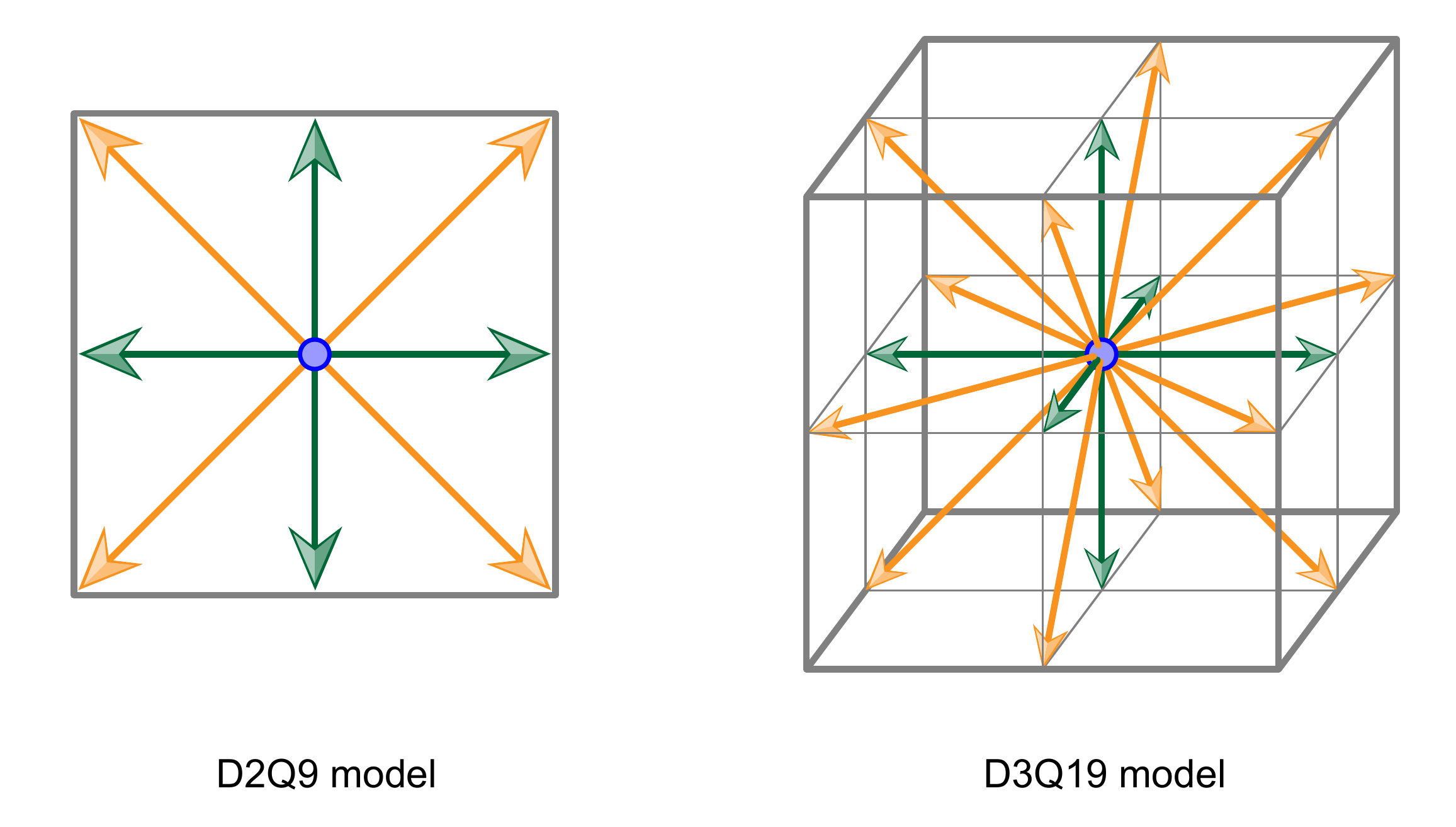}
\caption{(Color online) Illustration of the velocity vectors $\vect{\mathsf{e}}_i$ of the D2Q9 and D3Q19 models (adapted from \cite{ThureyPhD2006}). All arrows and the blue circle are associated with matrix-valued distribution functions $\mathsf{W}_i(\vect{x},t)$ in the present framework.}
\label{fig:LBmodels}
\end{figure}
Here we focus on the most commonly used lattice Boltzmann models, namely D2Q9 and D3Q19 illustrated in Fig.~\ref{fig:LBmodels}. The velocity vectors are enumerated as $\vect{\mathsf{e}}_i \in \R^d$, $i = 1,\dots,b$ with $b = 9$ for the D2Q9 model and $b = 19$ for the D3Q19 model. In our case, each velocity vector within a grid cell is associated with a Wigner matrix $\mathsf{W}_i(\vect{x},t) \in \mathcal{M}_2$, which can be interpreted as the average spin density at cell $\vect{x}$ with momentum $\vect{\mathsf{e}}_i$ at time $t$. For simplicity, we impose periodic boundary conditions on the lattice $\Lambda$ in the subsequent simulations.

For the BGK approximation of the collision operator in the discrete LBM model, the exact averaging in Eq.~\eqref{eq:AvrDensity}, \eqref{eq:AvrVelocity}, \eqref{eq:AvrEnergy} to obtain the moments is replaced by a multi-dimensional integration rule. In the following section~\ref{sec:Quadrature}, we will devise such an integration rule adapted to the Fermi-Dirac distribution and the discretized moments $\vect{\mathsf{e}}_i$, i.e., the points of the integration rule are precisely the $\vect{\mathsf{e}}_i$. In section~\ref{sec:WeqPoly}, we will construct a polynomial expansion of the Fermi-Dirac equilibrium distribution function $W_{\mathrm{FD}}$ compatible with the integration rule, which has the same moments as $W_{\mathrm{FD}}$. The polynomial equilibrium function is required by the discrete version of the BGK collision operator, as explained in section~\ref{sec:Algorithm}.

\subsection{Multidimensional numerical integration with Fermi-Dirac weight}
\label{sec:Quadrature}

Let us consider the following $d$-dimensional integral
\begin{equation}
\label{eq:Ih}
I[h] = \frac{1}{n_{d,\beta,\mu}} \int_{\R^d} h(\vect{p}) \left(\mathrm{e}^{\beta\left(\frac{1}{2}\vect{p}^2 - \mu\right)} + 1\right)^{-1} \ud\vect{p}
\end{equation}
for a given (smooth) function $h: \R^d \to \R$. Numerical approximations are typically of the form
\begin{equation}
\label{eq:quadratureGeneral}
I[h] \approx Q[h] = \sum_{i=1}^N w_i \, h\big(\vect{p}^{(i)}\big),
\end{equation}
with weights $w_i \in \R$ and points $\vect{p}^{(i)} \in \R^d$ independent of $h$. For $d \ge 2$, $Q[h]$ is denoted a \emph{cubature} (see \cite{Sobolev1997,CoolsCubature1997} and references therein). A cubature formula~\eqref{eq:quadratureGeneral} has degree $m$ if it is exact for polynomials $h$ of algebraic degree at most $m$, and not exact for at least one polynomial of degree $m + 1$. As usual, the degree of a monomial $\prod_{i=1}^d x_i^{\alpha_i}$ with $\alpha_i \in \N$ is given by $\abs{\vect{\alpha}} = \sum_{i=1}^d \alpha_i$, and the degree of a polynomial by the highest degree of its composing monomials. In our case, we look for fully symmetric quadrature rules \cite{SymmQuad1979} due to the radial symmetry of the Fermi-Dirac function, such that the moments \eqref{eq:Ih} are invariant under permutations and sign changes of coordinates.

As mentioned, we identify $\vect{p}^{(i)}$ with the velocity vectors $\vect{\mathsf{e}}_i$ of the D2Q9 model for $d = 2$ and the D3Q19 model for $d = 3$, as illustrated in Fig.~\ref{fig:LBmodels}.

In the classical case, the Maxwell-Boltzmann distribution separates into a product of Gaussians in each dimension, and the weights associated with each discrete velocity for the D2Q9 model can be derived via the Gauss-Hermite quadrature rule of order~$3$ \cite{HeLuoTheory1997}. However, the Fermi-Dirac distribution does not separate into a product of functions, so the following ansatz is used.

For dimension $d = 2$, we enumerate the 9 velocity vectors as $\xi \cdot(i_1, i_2) \in \R^2$ with $i_1, i_2 \in \{-1,0,1\}$ and $\xi \in \R_{>0}$. The corresponding weights are labeled $w_{\vect{i}} \in \R_{>0}$. Our goal is to determine $\xi$ and $w_{\vect{i}}$ such that
\begin{equation}
\label{eq:quadrature2D}
\frac{1}{n_{2,\beta,\mu}} \int_{\R^2} h(\vect{p}) \left(\mathrm{e}^{\beta\left(\frac{1}{2}\vect{p}^2 - \mu\right)} + 1\right)^{-1} \ud\vect{p} \approx \sum_{i_1,i_2 \in \{-1,0,1\}} w_{\vect{i}} \, h(\xi\,\vect{i})
\end{equation}
for smooth functions $h: \R^2 \to \R$, with equality for polynomials up to order $5$. Due to symmetry, $w_{i_1,i_2} = w_{i_2,i_1}$ and $w_{i_1,i_2} = w_{i_1',i_2'}$ when $i_1 = \pm i_1'$ and $i_2 = \pm i_2'$, so only the weights $w_{00}, w_{01}, w_{11}$ and the length $\xi$ need to be determined. With the analytic solution
\begin{equation}
\label{eq:QuadSolution2D}
w_{00} = 1 - 5a, \quad w_{01} = a, \quad w_{11} = \frac{a}{4}, \quad a = \frac{F_1(\beta\mu)^2}{9 \, F_0(\beta\mu) F_2(\beta\mu)}, \quad \xi = \left(\frac{3}{\beta} \frac{F_2(\beta\mu)}{F_1(\beta\mu)}\right)^{1/2},
\end{equation}
the approximation~\eqref{eq:quadrature2D} becomes indeed exact for polynomials up to order $5$, i.e., for all $h(\vect{p}) = p_1^{m_1} \, p_2^{m_2}$ with $m_1 + m_2 \le 5$. Note that all odd moments vanish due to the even symmetry of the Fermi-Dirac distribution.

For $\mu = 0$, the parameters in Eq.~\eqref{eq:QuadSolution2D} simplify to
\begin{equation}
\label{eq:QuadSolution2Dmu0}
a = \frac{\pi^4}{972\, \zeta(3) \log(2)} \doteq 0.120277, \quad \xi = \frac{3}{\pi} \left(\frac{3\,\zeta(3)}{\beta}\right)^{1/2} \doteq \frac{1.8134}{\sqrt{\beta}},
\end{equation}
where $\zeta(s)$ is the Riemann zeta function.

\medskip

For dimension $d = 3$, one can proceed analogously. We enumerate the points $\xi\cdot(i_1,i_2,i_3) \in \R^3$ and weights $w_{\vect{i}} \in \R_{>0}$ of the D3Q19 model (on the right in Fig.~\ref{fig:LBmodels}) using the indices $i_1, i_2, i_3 \in \{-1,0,1\}$, with the additional constraint that $\abs{i_1} + \abs{i_2} + \abs{i_3} \le 2$ due to the missing cell corners in the D3Q19 model. The goal is an approximation
\begin{equation}
\label{eq:quadrature3D}
\frac{1}{n_{3,\beta,\mu}} \int_{\R^3} h(\vect{p}) \left(\mathrm{e}^{\beta\left(\frac{1}{2}\vect{p}^2 - \mu\right)} + 1\right)^{-1} \ud\vect{p} \approx \sum_{\vect{i} \in \mathrm{D3Q19}} w_{\vect{i}} \, h(\xi\,\vect{i}),
\end{equation}
such that equality holds for polynomials $h$ up to order $5$. Again due to symmetry, only the weights $w_{000}, w_{001}, w_{011}$ and the length $\xi$ need to be determined. A short calculation results in the analytic solution
\begin{equation}
\label{eq:QuadSolution3D}
\begin{split}
w_{000} &= 1 - 12a, \quad w_{001} = a, \quad w_{011} = \frac{a}{2}, \quad a = 
\frac{F_{3/2}(\beta\mu)^2}{18\,F_{1/2}(\beta\mu) \, F_{5/2}(\beta\mu)}; \\
\xi &= \left(\frac{3}{\beta} \frac{F_{5/2}(\beta\mu)}{F_{3/2}(\beta\mu)}\right)^{1/2}.
\end{split}
\end{equation}
Evaluated at $\mu = 0$, the parameters in Eq.~\eqref{eq:QuadSolution3D} simplify to
\begin{equation}
\label{eq:QuadSolution3Dmu0}
a = \frac{\left(41 + 9\sqrt{2}\right) \zeta(\frac{5}{2})^2}{558 \, \zeta(\frac{3}{2}) \, \zeta(\frac{7}{2})} \doteq 0.0588684, \quad \xi = \left(\frac{3}{\beta} \frac{\left(15 + 2\sqrt{2}\right) \zeta(\frac{7}{2})}{14\,\zeta(\frac{5}{2})}\right)^{1/2} \doteq \frac{1.79131}{\sqrt{\beta}}.
\end{equation}

\medskip

In the following, the inverse temperature and chemical potential of the integration rule are denoted by $\beta_0$ and $\mu_0$, respectively, to distinguish them from the physical quantities. According to Eq.~\eqref{eq:QuadSolution2D} and \eqref{eq:QuadSolution3D}, the length factor $\xi$ depends on $\beta_0$ and $\mu_0$. The uniform grid of the lattice Boltzmann methods requires the same $\xi$ for all cells, which can be identified as the lattice constant. Thus we set $\beta_0$ to a uniform, fixed value for all grid cells, and $\mu_0 = 0$. Different from that, the \emph{physical} quantities $\beta$, $\mu_{\uparrow}$ and $\mu_{\downarrow}$ of the equilibrium function associated with a grid cell can vary across cells (see the following section~\ref{sec:WeqPoly}). An alternative route (not pursued in the present paper) would be an \emph{isothermal} model, where $\beta = \beta_0$ is held fixed for all grid cells and time steps. The main disadvantage of this approach is the loss of the energy conservation in the simulation. Note that isothermal models are typically used in ``classical'' lattice Boltzmann methods \cite{HeLuoTheory1997}.

\medskip

The integration rule replaces the analytic integrals in Eq.~\eqref{eq:AvrDensity}, \eqref{eq:AvrVelocity} and \eqref{eq:AvrEnergy} for the local density, velocity and energy by the discrete analogues (at given $\vect{x}, t$)
\begin{equation}
\label{eq:DiscrAvr}
\rho = \sum_{i=1}^b \mathsf{W}_i(\vect{x},t), \quad \tr[\rho]\,\vect{u} = \sum_{i=1}^b \vect{\mathsf{e}}_i \tr[\mathsf{W}_i(\vect{x},t)], \quad \tr[\rho]\,\varepsilon = \sum_{i=1}^b \tfrac{1}{2}\abs{\vect{\mathsf{e}}_i - \vect{u}}^2 \tr[\mathsf{W}_i(\vect{x},t)].
\end{equation}
By convention, the integration weights $w_i$ are already absorbed into the discretized Wigner functions $\mathsf{W}_i(\vect{x},t)$.

\subsection{Polynomial expansion of the equilibrium distribution function}
\label{sec:WeqPoly}

Our goal is to construct an approximate equilibrium distribution function compatible with the quadrature formula Eq.~\eqref{eq:quadrature2D} at $\mu = 0$, such that its moments agree with the exact Fermi-Dirac values \eqref{eq:rhoFDstdOmega}, \eqref{eq:energyFDstdOmega} and the average velocity $\vect{u}$. The following ansatz is used:
\begin{multline}
\label{eq:WeqPoly}
W^{(\mathrm{eq})}(\vect{p}) = \frac{1}{n_{d,\beta_0,0}} \left(\mathrm{e}^{\beta_0 \frac{1}{2}\vect{p}^2} + 1\right)^{-1} U \begin{pmatrix}n_{d,\beta,\mu_{\uparrow}}& 0\\ 0& n_{d,\beta,\mu_{\downarrow}}\end{pmatrix} U^* \\
\times \Big( \alpha_1 + \alpha_2 \,\beta_0\,\tfrac{1}{2}\vect{p}^2 + \alpha_3\,\beta_0\,(\vect{p}\cdot\vect{u}) + \alpha_4 (\beta_0\, \vect{p}\cdot\vect{u})^2 + \alpha_5\,\beta_0\,\tfrac{1}{2}\vect{u}^2 \Big),
\end{multline}
with the coefficients $\alpha_i$ to be determined. Note that $\mu$ has disappeared from the exponent in Eq.~\eqref{eq:WeqPoly}, as required by the quadrature rule mentioned above. A solution for the coefficients in dimensions $d = 2$ and $d = 3$ is provided in the appendix. It turns out that $\alpha_3$, $\alpha_4$ and $\alpha_5$ are constants, and $\alpha_1$, $\alpha_2$ only depend on $\beta_0\,\varepsilon_{\mathrm{FD}}$, i.e., $\beta_0$ times the average internal energy defined in Eq.~\eqref{eq:energyFDstdOmega}. This dependence and being matrix-valued is the main difference to the conventional equilibrium function used in the classical lattice Boltzmann method; otherwise, the expression~\eqref{eq:WeqPoly} formally resembles the classical counterpart.

\begin{figure}[!ht]
\centering
\subfloat[$d = 2$, $\beta_0 = 0.8$]{
\includegraphics[width=0.4\textwidth]{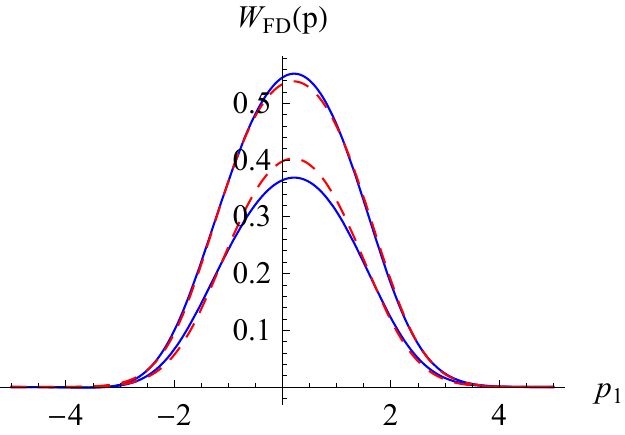}}
\hspace{0.1\textwidth}
\subfloat[$d = 3$, $\beta_0 = 0.8$]{
\includegraphics[width=0.4\textwidth]{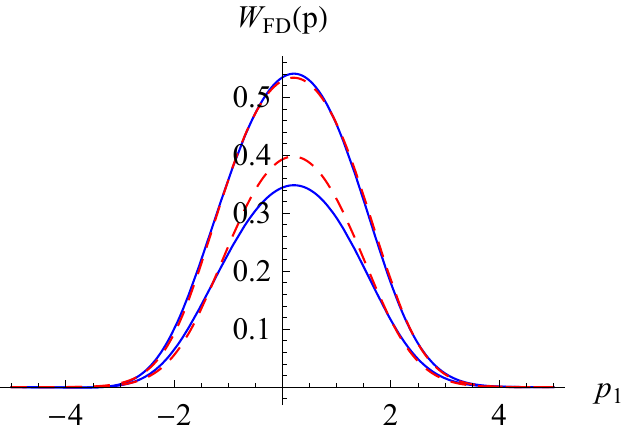}}\\
\subfloat[$d = 2$, $\beta_0 = 1.2$]{
\includegraphics[width=0.4\textwidth]{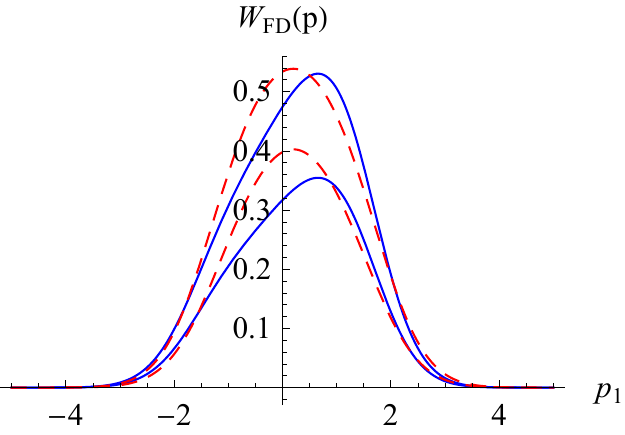}}
\hspace{0.1\textwidth}
\subfloat[$d = 3$, $\beta_0 = 1.2$]{
\includegraphics[width=0.4\textwidth]{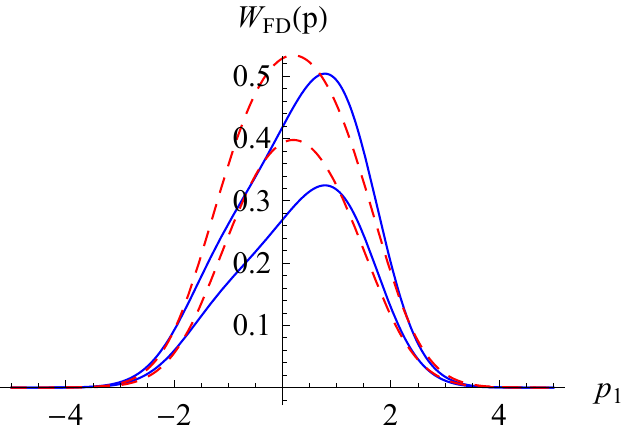}}
\caption{(Color online) Diagonal matrix entries of the polynomial approximation in Eq.~\eqref{eq:WeqPoly} with $U = \mathbbm{1}$ (blue solid curves) compared to the exact Fermi-Dirac equilibrium distribution $W_{\mathrm{FD}}$ defined in Eq.~\eqref{eq:WFermiDirac} with $\omega(\vect{p}) = \frac{1}{2} \abs{\vect{p}}^2$ (dashed red), for $d = 2$ (left column) and $d = 3$ (right column).}
\label{fig:Weq}
\end{figure}

Note that the approximation \eqref{eq:WeqPoly} of $W_{\mathrm{FD}}$ is \emph{per se} independent of the quadrature formula constructed in section~\ref{sec:Quadrature}. In particular, the moments of $W^{(\mathrm{eq})}$ can be calculated by continuous integration, according to Eq.~\eqref{eq:AvrDensity}, \eqref{eq:AvrVelocity} and \eqref{eq:AvrEnergy}. But of course the structure of $W^{(\mathrm{eq})}$ allows us to obtain the exact same moments using the quadrature formula from section~\ref{sec:Quadrature}.

Fig.~\ref{fig:Weq} compares the approximation \eqref{eq:WeqPoly} with the exact Fermi-Dirac function $W_{\mathrm{FD}}$, both for $d = 2$ and $d = 3$. The upper row differs from the lower solely by $\beta_0$, the ``reference'' inverse temperature used in the weight function of the numerical integration rule. The functions are plotted along the $p_1$ axis through the origin. The parameters defining the distribution function read $\beta = 1$, $\mu_{\uparrow} = 0.2$, $\mu_{\downarrow} = -0.35$, $\vect{u} = (0.2, 0.3)$ for $d = 2$ and $\vect{u} = (0.2, 0.3, -0.2)$ for $d = 3$. One notices the larger deviation from the exact curve as $\beta_0$ overshoots the physical $\beta$, a feature which seems to be systematic for the polynomial approximation.

In terms of the velocity vectors $\vect{\mathsf{e}}_i$, the equilibrium function \eqref{eq:WeqPoly} evaluated at these points times the corresponding weight $w_i$ gives the discretized equilibrium distribution function
\begin{multline}
\label{eq:WeqDiscr}
\mathsf{W}^{(\mathrm{eq})}_i = w_i \, U \begin{pmatrix}n_{d,\beta,\mu_{\uparrow}}& 0\\ 0& n_{d,\beta,\mu_{\downarrow}}\end{pmatrix} U^* \\
\times \Big( \alpha_1 + \alpha_2\,\beta_0\,\tfrac{1}{2}\vect{\mathsf{e}}_i^2 + \alpha_3 \,\beta_0\,(\vect{\mathsf{e}}_i\cdot\vect{u}) + \alpha_4 (\beta_0\,\vect{\mathsf{e}}_i\cdot\vect{u})^2 + \alpha_5\,\beta_0\,\tfrac{1}{2}\vect{u}^2 \Big),
\end{multline}
which will be used in the subsequent lattice Boltzmann simulation.

The approximate polynomial equilibrium function constructed so far has the drawback that the last factor in Eq.~\eqref{eq:WeqDiscr} can become negative for certain values of $\beta$, $\mu_{\uparrow}$, $\mu_{\downarrow}$ and $\vect{u}$, resulting in an (unphysical) $\mathsf{W}^{(\mathrm{eq})}_i$ with negative eigenvalues. This issue is discussed in the following section.

\subsection{Quantum lattice Boltzmann method}
\label{sec:Algorithm}

Using the results from Sec.~\ref{sec:Quadrature} and \ref{sec:WeqPoly}, we can now specify the details of the numerical framework using the BGK collision operator. The framework takes as input the relaxation time $\tau$ and a reference inverse temperature $\beta_0$ (globally constant) for the numerical integration (see section~\ref{sec:Quadrature}). The weights $w_i$ and length parameter $\xi$ can then be precomputed according to Eq.~\eqref{eq:QuadSolution2Dmu0} for $d = 2$ and Eq.~\eqref{eq:QuadSolution3Dmu0} for $d = 3$. The transport step in Eq.~\eqref{eq:TransportStep} below relates $\xi$ to the time step $\Delta t$ and lattice cell width $\Delta x$ via $\Delta t\, \xi = \Delta x$. Furthermore, an initial configuration of distribution functions $\mathsf{W}_i(\vect{x},t_0)$ is required as input.

The framework approximates the kinetics of the Boltzmann equation by performing the following calculations for each discrete time step $t \to t + \Delta t$:
\begin{itemize}

\item For each cell $\vect{x}$, the local average momentum $\rho$, velocity  $\vect{u}$ and energy $\varepsilon$ of $\mathsf{W}_i(\vect{x}, t)$ are calculated according to Eq.~\eqref{eq:DiscrAvr}. These averages define the local equilibrium function
\begin{equation}
\label{eq:WeqAlgo}
\mathsf{W}^{(\mathrm{eq})}_i = w_i \, \rho\, \Big( \alpha_1 + \alpha_2\,\beta_0\,\tfrac{1}{2}\vect{\mathsf{e}}_i^2 + \alpha_3 \,\beta_0\,(\vect{\mathsf{e}}_i\cdot\vect{u}) + \alpha_4 (\beta_0\,\vect{\mathsf{e}}_i\cdot\vect{u})^2 + \alpha_5\,\beta_0\,\tfrac{1}{2}\vect{u}^2 \Big),
\end{equation}
with the coefficients $\alpha_j$ provided in the appendix (setting $\varepsilon_{\mathrm{FD}} = \varepsilon$). The unitary matrix $U$ appearing in Eq.~\eqref{eq:WeqDiscr} does not need to be computed, nor any Fermi-Dirac integral function $F_k(x)$.

\item Collision: for each cell $\vect{x}$, the discretized BGK collision operator is applied to obtain the post-collisional distribution function $\mathsf{W}^{\mathrm{coll}}_i(\vect{x},t)$:
\begin{equation}
\label{eq:CollisionStep}
\mathsf{W}^{\mathrm{coll}}_i(\vect{x},t) = \mathsf{W}_i(\vect{x},t) + \frac{\Delta t}{\tau} \left(\mathsf{W}^{\mathrm{(eq)}}_i - \mathsf{W}_i(\vect{x},t)\right),
\end{equation}
with $\mathsf{W}^{\mathrm{(eq)}}_i$ defined in Eq.~\eqref{eq:WeqAlgo}.

\item Streaming: the distribution functions are transported along their respective velocity directions $\vect{\mathsf{e}}_i$:
\begin{equation}
\label{eq:TransportStep}
\mathsf{W}_i(\vect{x}+\Delta t\,\vect{\mathsf{e}}_i, t + \Delta t) = \mathsf{W}^{\mathrm{coll}}_i(\vect{x},t)
\end{equation}
for all $i = 1,\dots,b$ and cells $\vect{x}$.
This step approximates the transport term \\ $\vect{p} \cdot \vect{\nabla}_{\vect{x}} W(\vect{p},\vect{x},t)$ in the Boltzmann equation~\eqref{eq:BoltzmannEquation}.
\end{itemize}
The numerical operations required by the algorithm are relatively simple, and the Fermi-Dirac integral functions $F_k(x)$ need not be evaluated.

To compensate for potentially negative-definite, unphysical $\mathsf{W}^{(\mathrm{eq})}_i$ in \eqref{eq:WeqAlgo}, one could dynamically decrease $\frac{1}{\tau}$ in \eqref{eq:CollisionStep} such that $\mathsf{W}^{\mathrm{coll}}_i(\vect{x},t)$ is guaranteed to be positive semidefinite. However, this could lead to $\frac{1}{\tau} \to 0$, such that the simulation effectively halts. We have therefore avoided this adjustment; for the numerical examples in Sec.~\ref{sec:Simulation}, negative-definite distribution functions appear rarely and do not seem to affect the simulation results.

Importantly, the numerical scheme preserves the local, discrete versions of the density, momentum and energy conservation laws. For periodic boundary conditions without external forces, it follows that the global average density, velocity and total energy
\begin{align}
\label{eq:GlobalDensity}
\rho(t) &= \frac{1}{\abs{\Lambda}} \sum_{\vect{x} \in \Lambda} \rho(\vect{x},t)
= \frac{1}{\abs{\Lambda}} \sum_{\vect{x} \in \Lambda} \sum_{i=1}^b \mathsf{W}_i(\vect{x},t),\\
\label{eq:GlobalVelocity}
\tr[\rho(t)]\, \vect{u}(t) &= \frac{1}{\abs{\Lambda}} \sum_{\vect{x} \in \Lambda} \tr[\rho(\vect{x},t)]\,\vect{u}(\vect{x},t) = \frac{1}{\abs{\Lambda}} \sum_{\vect{x} \in \Lambda} \sum_{i=1}^b \vect{\mathsf{e}}_i \tr[\mathsf{W}_i(\vect{x},t)], \quad \\
\begin{split}
\label{eq:GlobalTotalEnergy}
\tr[\rho(t)]\, \varepsilon_{\mathrm{tot}}(t) &= \frac{1}{\abs{\Lambda}} \sum_{\vect{x} \in \Lambda} \tr[\rho(\vect{x},t)]\left(\varepsilon(\vect{x},t) + \tfrac{1}{2} \abs{\vect{u}(\vect{x},t)}^2\right) \\
&= \frac{1}{\abs{\Lambda}} \sum_{\vect{x} \in \Lambda} \sum_{i=1}^b \tfrac{1}{2} \abs{\vect{\mathsf{e}}_i}^2 \tr[\mathsf{W}_i(\vect{x},t)]
\end{split}
\end{align}
remain constant under the numerical time evolution. Since the final ($t \to \infty$) state is expected to be homogeneous (uniform in $\vect{x}$), the the discussion following the proof of proposition~\ref{prop:MapMomentsParams} applies, and the average quantities \eqref{eq:GlobalDensity}, \eqref{eq:GlobalVelocity}, \eqref{eq:GlobalTotalEnergy} actually determine the final equilibrium state.

Notably, the analysis in \cite{Wagner1998} suggests that a discrete version of the H-theorem does not exist for the present framework.

\section{Simulation and validation}
\label{sec:Simulation}

\paragraph{Riemann problem} To validate the quantum LBM scheme, we compare it to an analytic solution of the Riemann problem \cite{LeVeque2002} for the Euler equations \eqref{eq:Euler}, which are expected to be a good approximation close to thermal equilibrium.
\begin{figure}[!b]
\centering
\subfloat[density ${\tr[\rho]}$]{
\includegraphics[width=0.3\textwidth]{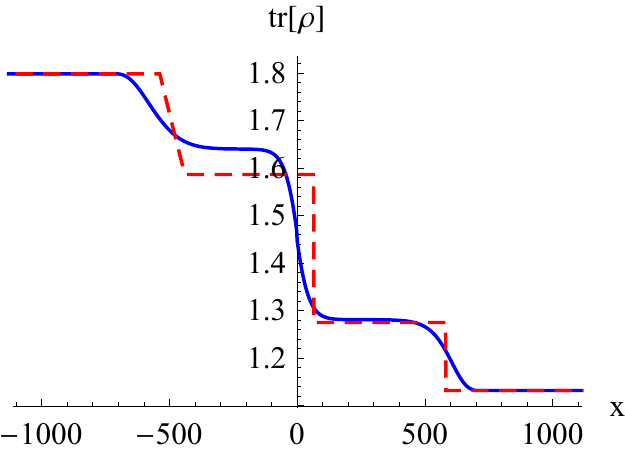}}
\hspace{0.02\textwidth}
\subfloat[velocity]{
\includegraphics[width=0.3\textwidth]{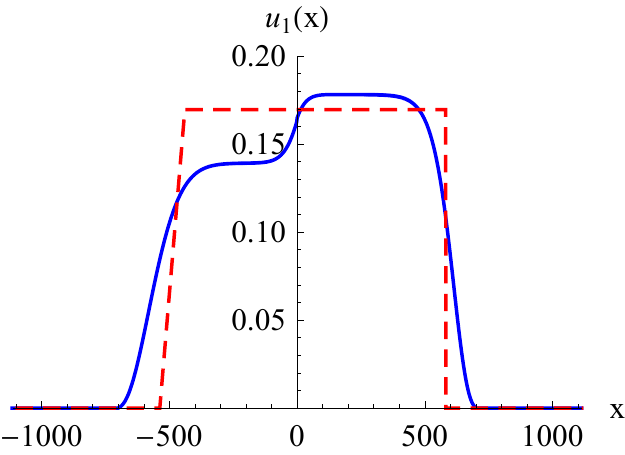}}
\hspace{0.02\textwidth}
\subfloat[pressure ${P = (\gamma-1) \tr[\rho] \varepsilon}$]{
\includegraphics[width=0.3\textwidth]{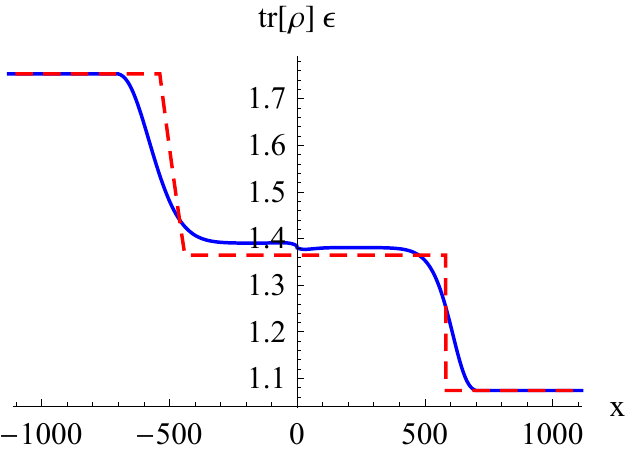}}
\caption{(Color online) Solution of a Riemann problem using the present LBM scheme after 385 time steps (blue solid curves) in comparison with a corresponding analytic solution of the Euler equations \eqref{eq:Euler} (red dashed).}
\label{fig:EulerRiemann}
\end{figure}
Specifically, the initial state consists of two thermodynamically equilibrated regions which are divided at $x = 0$ by a virtual membrane. At $t = 0$ the membrane is removed and the two regions start to intermingle with each other. In the left region we initially choose $W_{\mathrm{FD}}$ with $\mu_{\uparrow} = -\frac{3}{2}$ and $\mu_{\downarrow} = -\frac{5}{2}$, and in the right region $\mu_{\uparrow} = -2$ and $\mu_{\downarrow} = -3$. The initial inverse temperature $\beta = 1$ and velocity $\vect{u} = 0$ are the same for both regions. The corresponding density and energy can be computed according to \eqref{eq:rhoFDstdOmega} and \eqref{eq:energyFDstdOmega}. The analytical solution \cite{LeVeque2002} of the Euler equations depends only on the ratio $\frac{x}{t}$. With the described initial data, it consists of a shock wave traveling to the right, a contact discontinuity with a jump in density traveling also to the right at lower speed, and a rarefaction wave traveling to the left. For an ideal polytropic gas, the pressure is given by $P = (\gamma - 1) \tr[\rho] \varepsilon$ with $\gamma$ the adiabatic exponent. In our case $\gamma = 1 + \frac{2}{d} = 2$ since we run the quantum LBM scheme on a two-dimensional grid with dimensions $2048 \times 2$ (quasi-1D) and periodic boundary conditions. The relaxation time of the collision operator was set to $\tau = 10$. The LBM solution is compared to the analytic Euler solution in Fig.~\ref{fig:EulerRiemann}. The features of the Euler solution are qualitatively reproduced, albeit with a small jump in velocity close to $x = 0$.

\paragraph{Interference} Next, we study the interference pattern emerging from the collision of two wavelet-shaped density distribution ``packets'' in two dimensions, as shown in Fig.~\ref{fig:SpinwaveDensity}.
\begin{figure}[!b]
\centering
\subfloat[{$\tr[\rho]$} at $t = 0$]{
\includegraphics[width=0.3\textwidth]{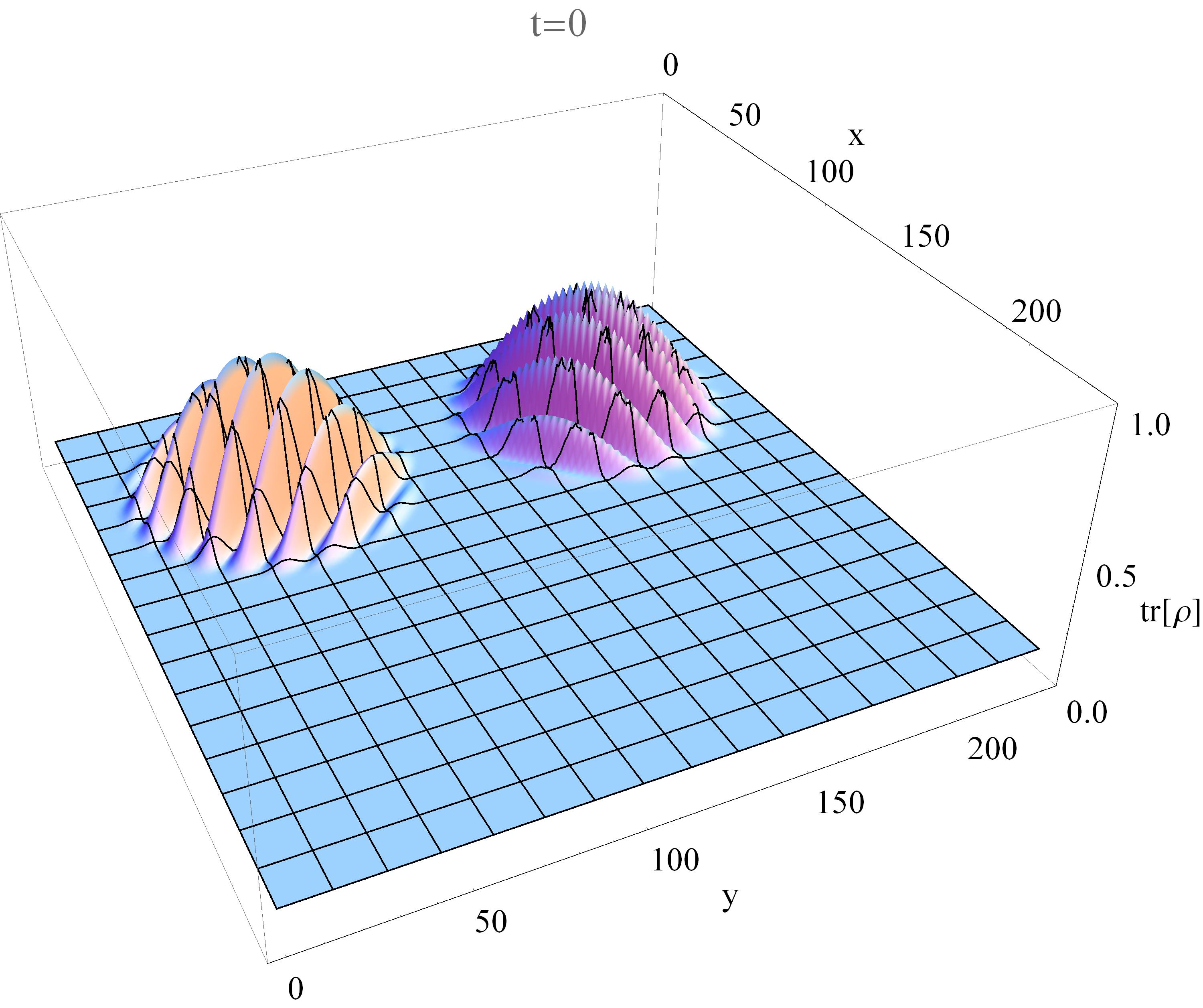}}
\hspace{0.02\textwidth}
\subfloat[{$\tr[\rho]$} at time step $32$]{
\includegraphics[width=0.3\textwidth]{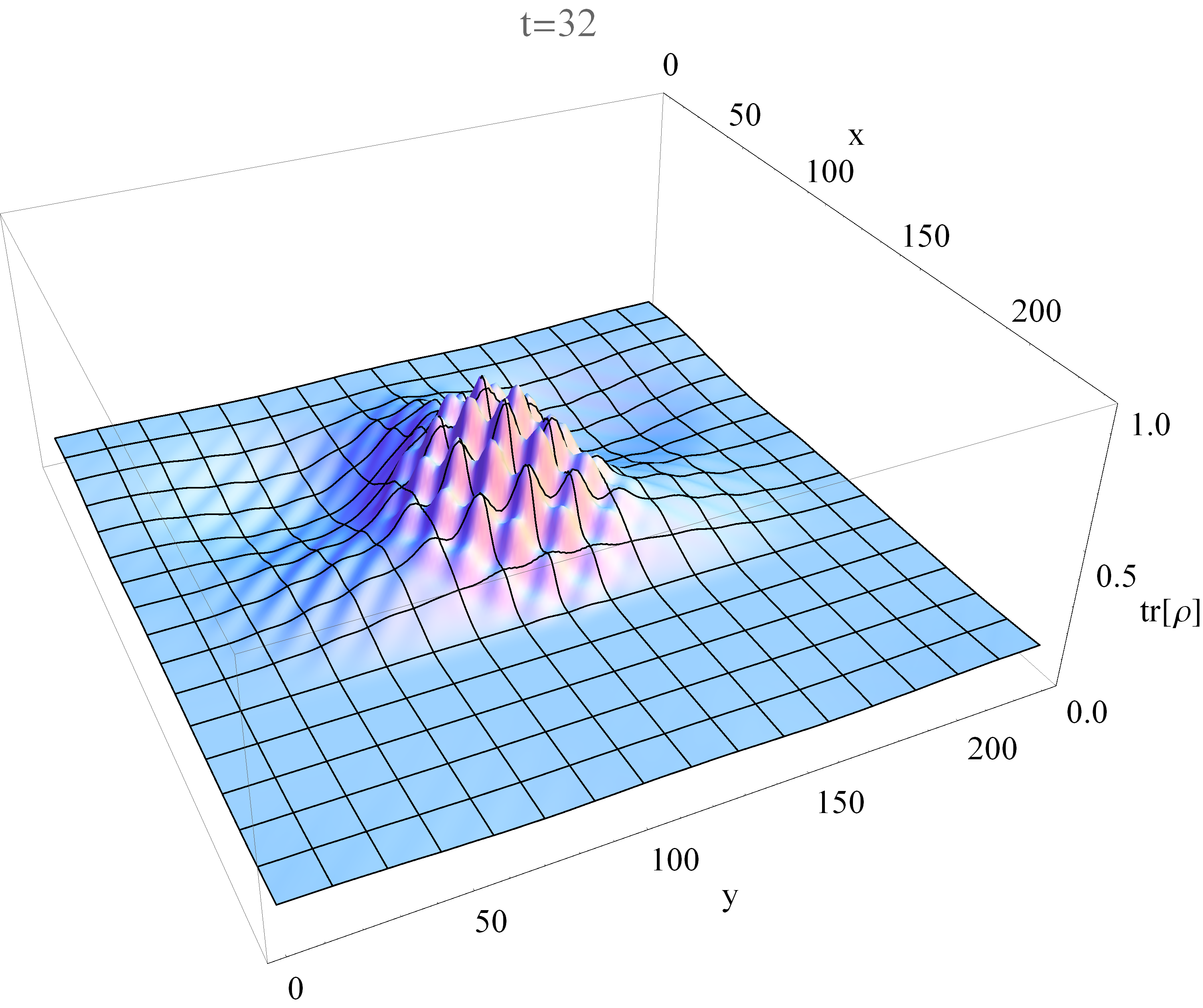}}
\hspace{0.02\textwidth}
\subfloat[{$\tr[\rho]$} at time step $54$]{
\includegraphics[width=0.3\textwidth]{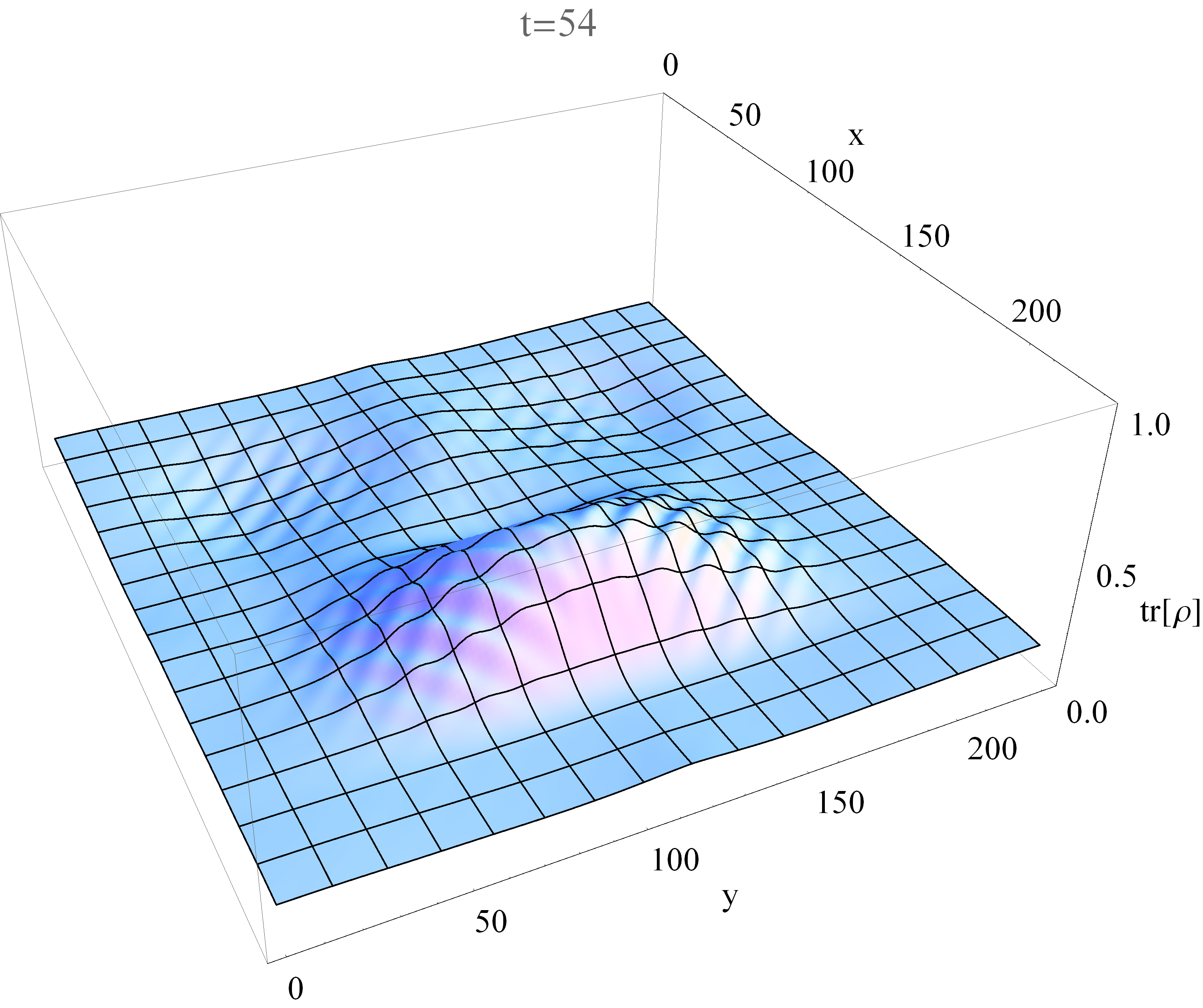}}\\
\subfloat[Bloch vectors at $t = 0$]{
\includegraphics[width=0.3\textwidth]{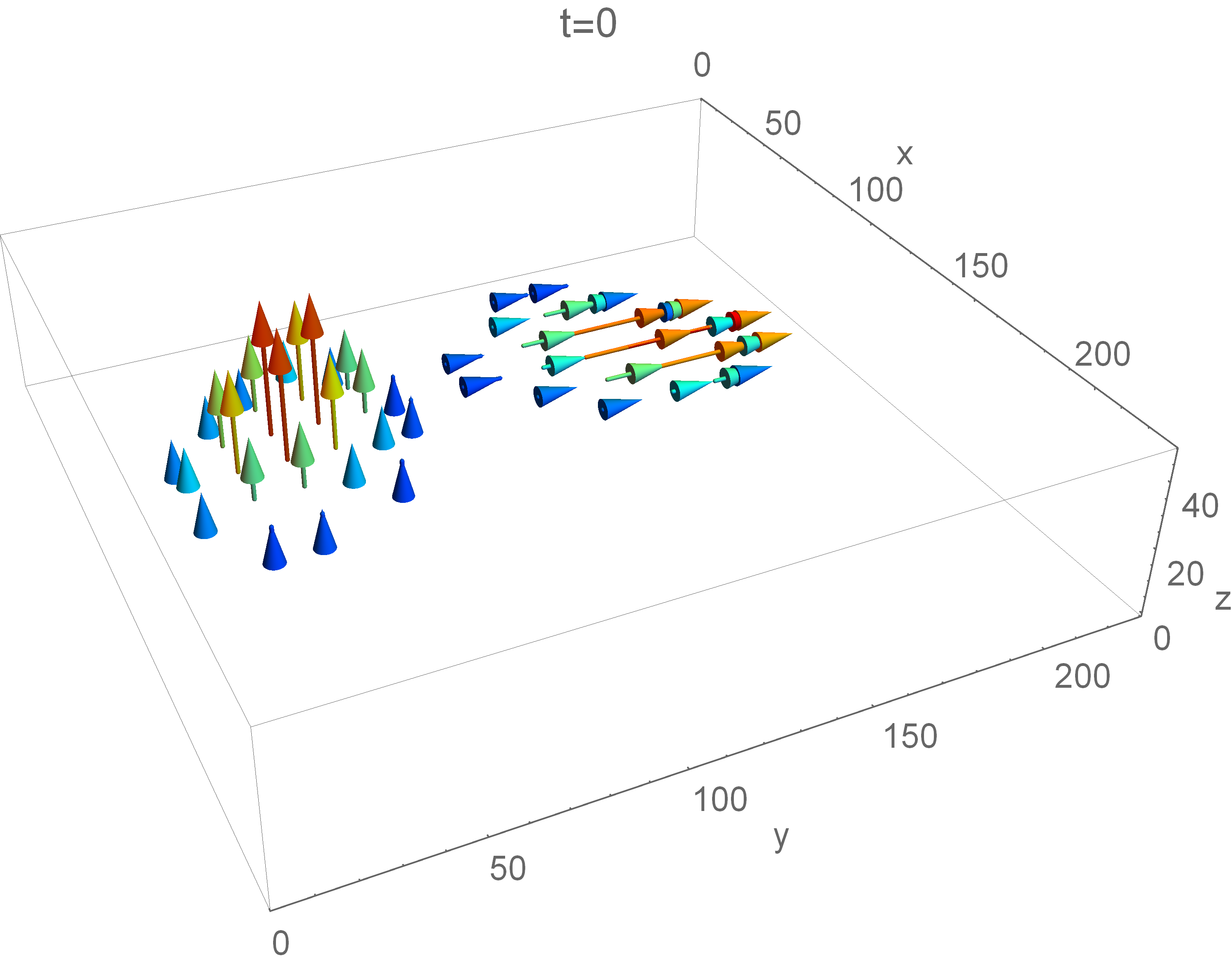}}
\hspace{0.02\textwidth}
\subfloat[Bloch vectors at time step $32$]{
\includegraphics[width=0.3\textwidth]{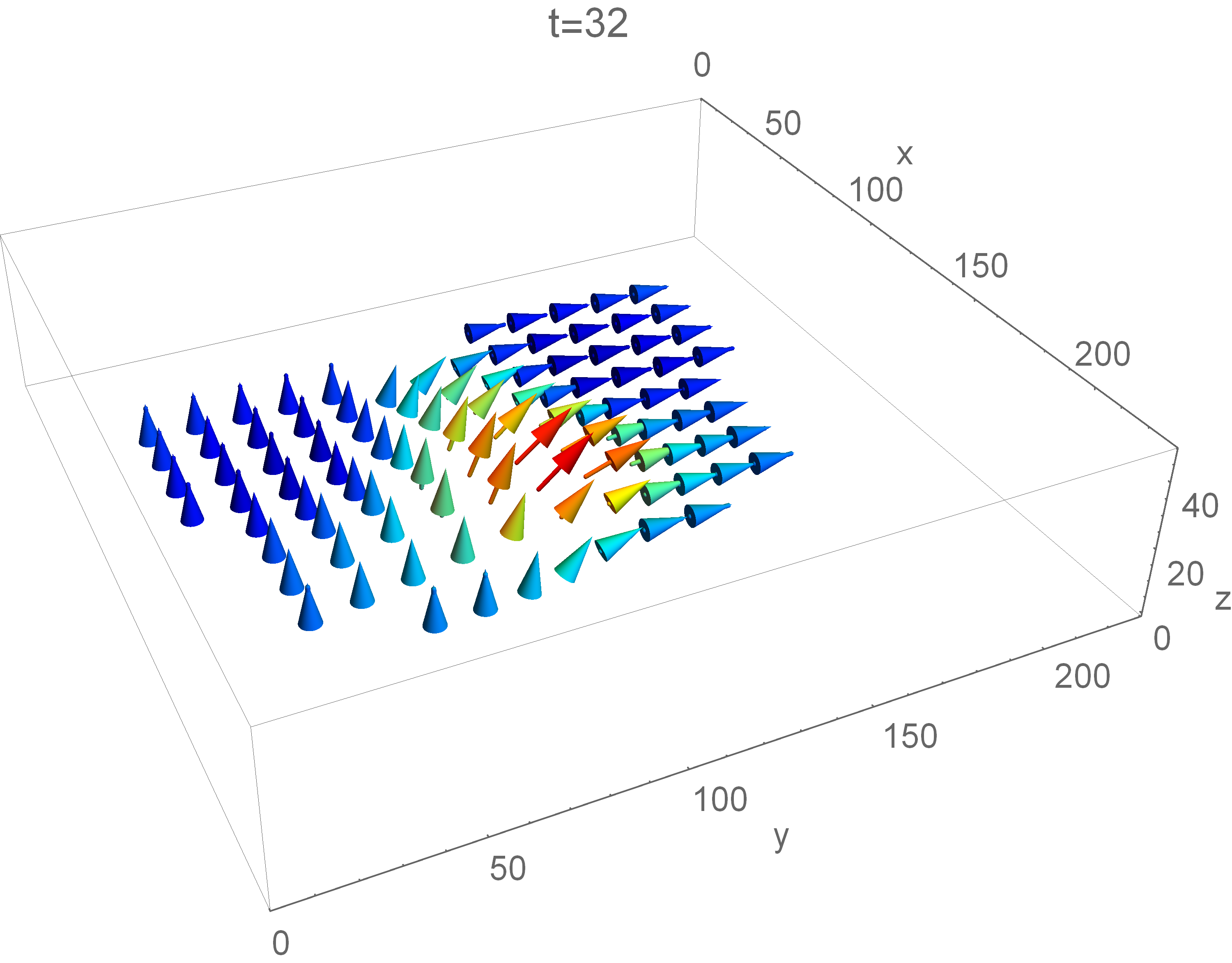}}
\hspace{0.02\textwidth}
\subfloat[Bloch vectors at time step $54$]{
\includegraphics[width=0.3\textwidth]{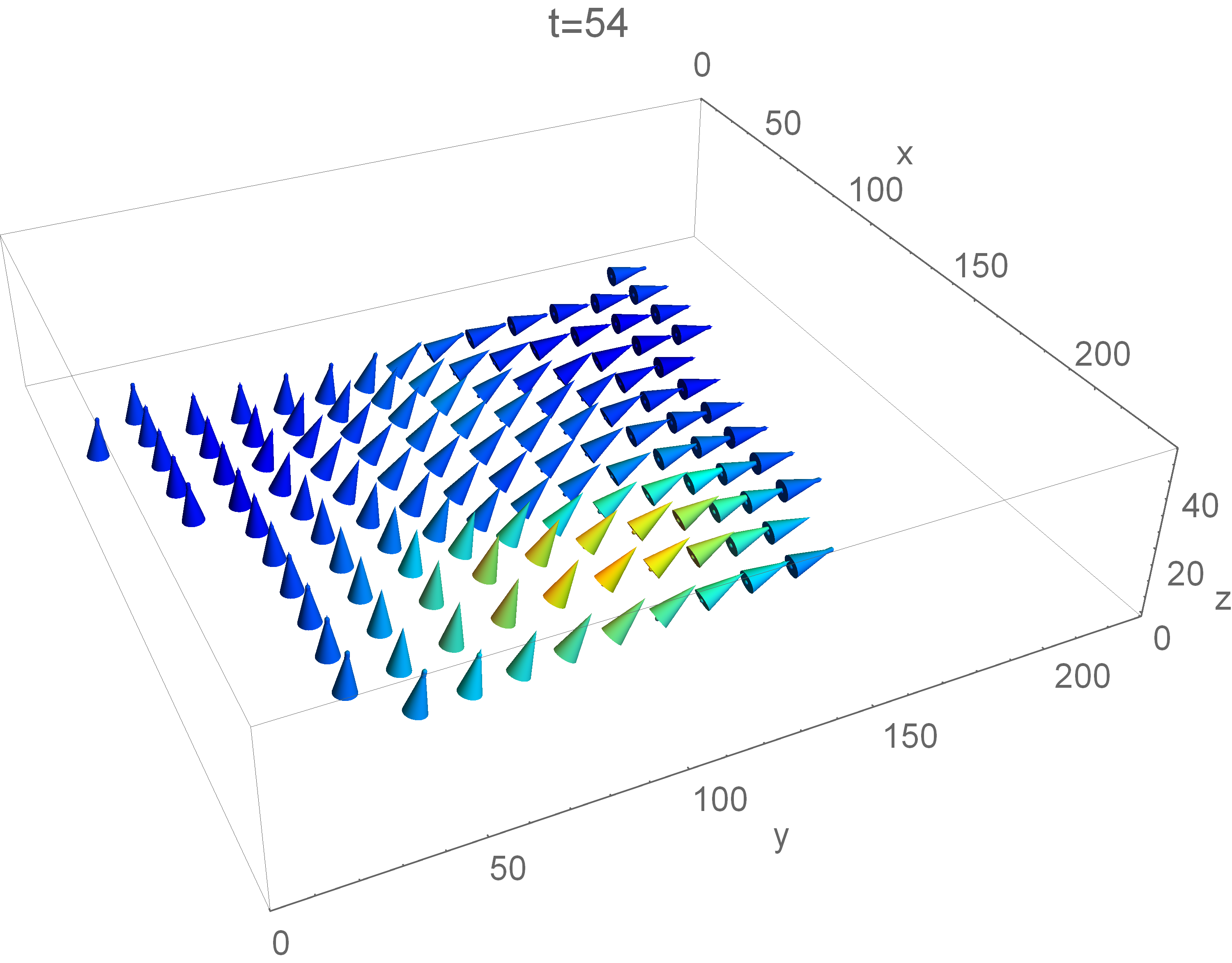}}
\caption{(Color online) Collision and interference of two wavelet-shaped density distributions with different spin orientations (Bloch vectors). The top row shows the trace of the spin density matrix $\rho(\vect{x},t)$, and the bottom row the corresponding Bloch vectors (with color encoding the density). The Bloch vectors have been scaled by the factor 400 for visual clarity.}
\label{fig:SpinwaveDensity}
\end{figure}
The top row illustrates the trace of the density matrix $\rho(\vect{x},t)$ at various time points, and the bottom row the corresponding Bloch vectors $\vect{r} \in \R^3$ (for each $\vect{x}, t$) with components $r_i = \tr[ \rho\, \sigma_i]$, where $\sigma_i$ are the Pauli spin matrices. The density matrix is uniquely determined by its trace and Bloch vector. The simulation domain consists of $128 \times 128$ grid cells, the relaxation time of the collision operator was set to $\tau = 10$, and the reference inverse temperature to $\beta_0 = 1$. The corresponding lattice constant is $\xi \doteq 1.8134$ according to Eq.~\eqref{eq:QuadSolution2Dmu0}. Initially at $t = 0$, the two packets have different spin orientations: the Bloch vectors of the left packet point in positive $z$-direction, while the Bloch vectors of the other packet point into $y$-direction. The velocities of the two packets at $t = 0$ have directions $\frac{1}{\sqrt{2}} (1, 1)$ and $\frac{1}{\sqrt{2}} (1, -1)$, respectively, such that the packets eventually collide and interfere, as illustrated at $t = 32$ in Fig.~\ref{fig:SpinwaveDensity}. After the collision, there is a smooth central wave traveling into direction $(1, 0)$, shown in the right column of Fig.~\ref{fig:SpinwaveDensity}. One notices that the Bloch vectors form a smooth transition from the $z$-direction to the $y$-direction after the collision.

\begin{figure}[!ht]
\centering
\includegraphics[width=0.5\textwidth]{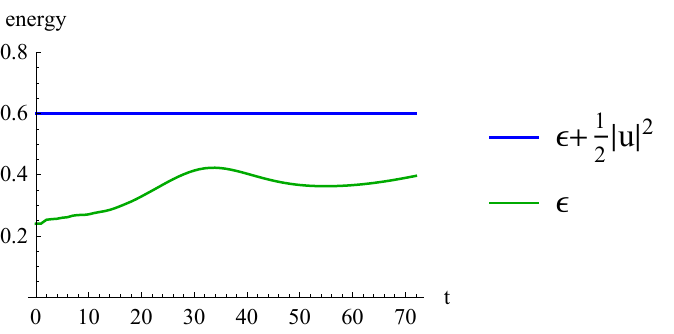}
\caption{(Color online) Time evolution of the global energy corresponding to Fig.~\ref{fig:SpinwaveDensity}. The total energy (internal energy plus kinetic energy, blue) defined in Eq.~\eqref{eq:GlobalTotalEnergy} remains constant, as expected. The global internal energy (green) can vary in time.}
\label{fig:SpinwaveEnergy}
\end{figure}
The global energy corresponding to the simulation in Fig.~\ref{fig:SpinwaveDensity} is visualized in Fig.~\ref{fig:SpinwaveEnergy}. The total energy defined in Eq.~\eqref{eq:GlobalTotalEnergy} remains indeed constant up to numerical precision, while the internal energy varies in time.

\begin{figure}[!ht]
\centering
\subfloat[Bloch vectors at $t = 0$]{
\includegraphics[width=0.3\textwidth]{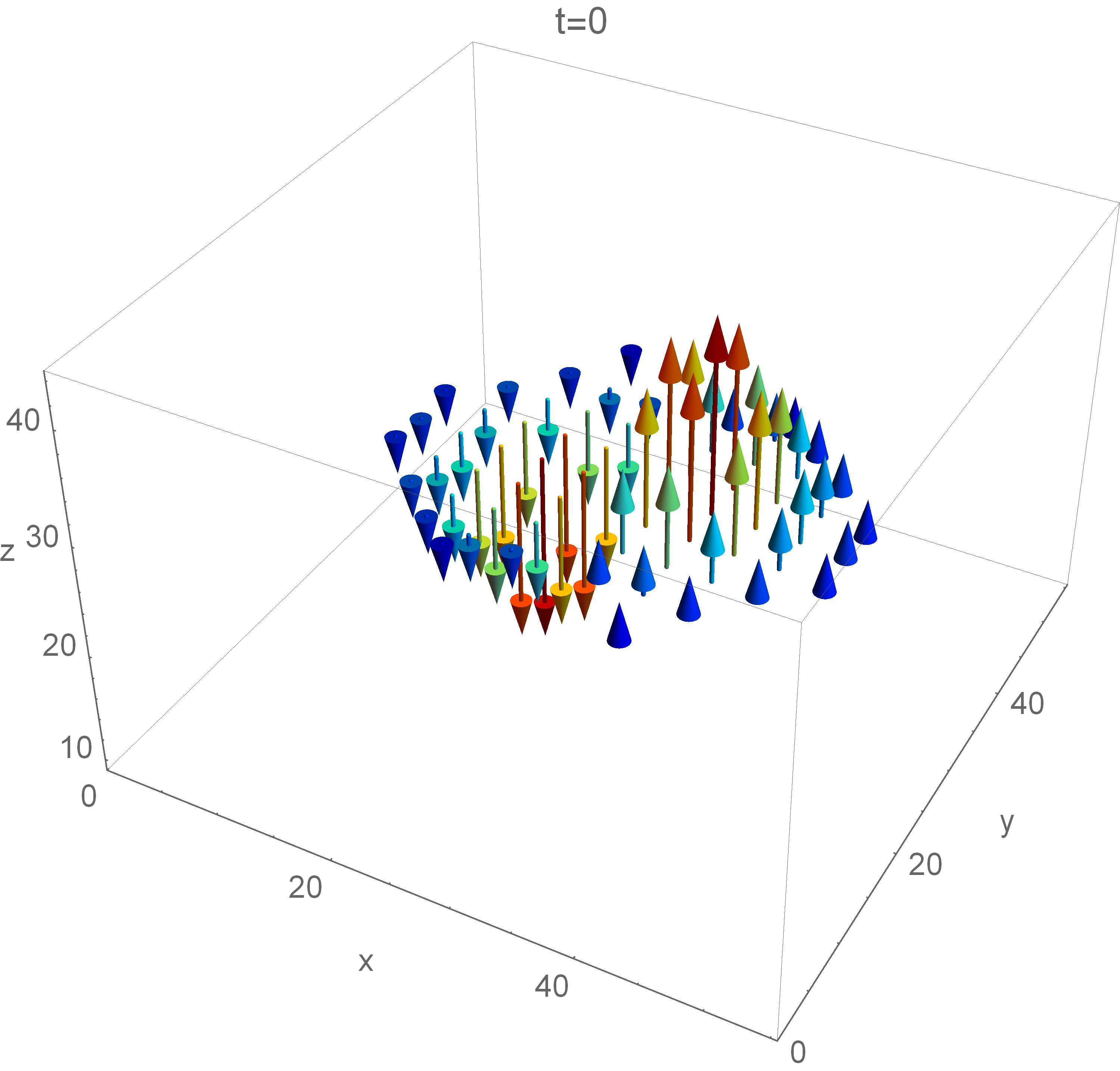}
\label{fig:Bloch3DInitial}}
\hspace{0.02\textwidth}
\subfloat[outline of $r_z$ at $t = 0$]{
\includegraphics[width=0.3\textwidth]{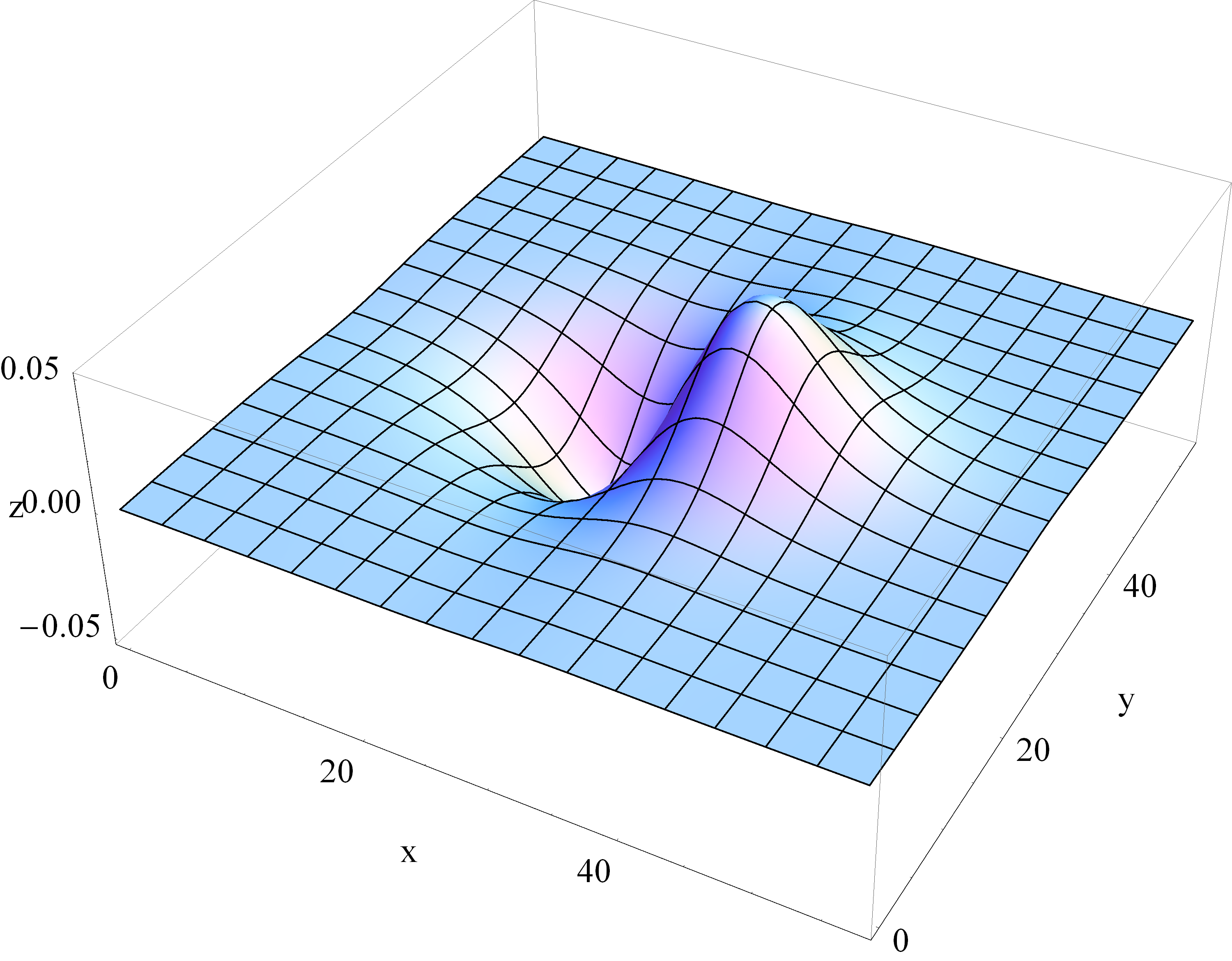}
\label{fig:Bloch3DOutline}}
\hspace{0.02\textwidth}
\subfloat[Bloch vectors at $t = 4$]{
\includegraphics[width=0.3\textwidth]{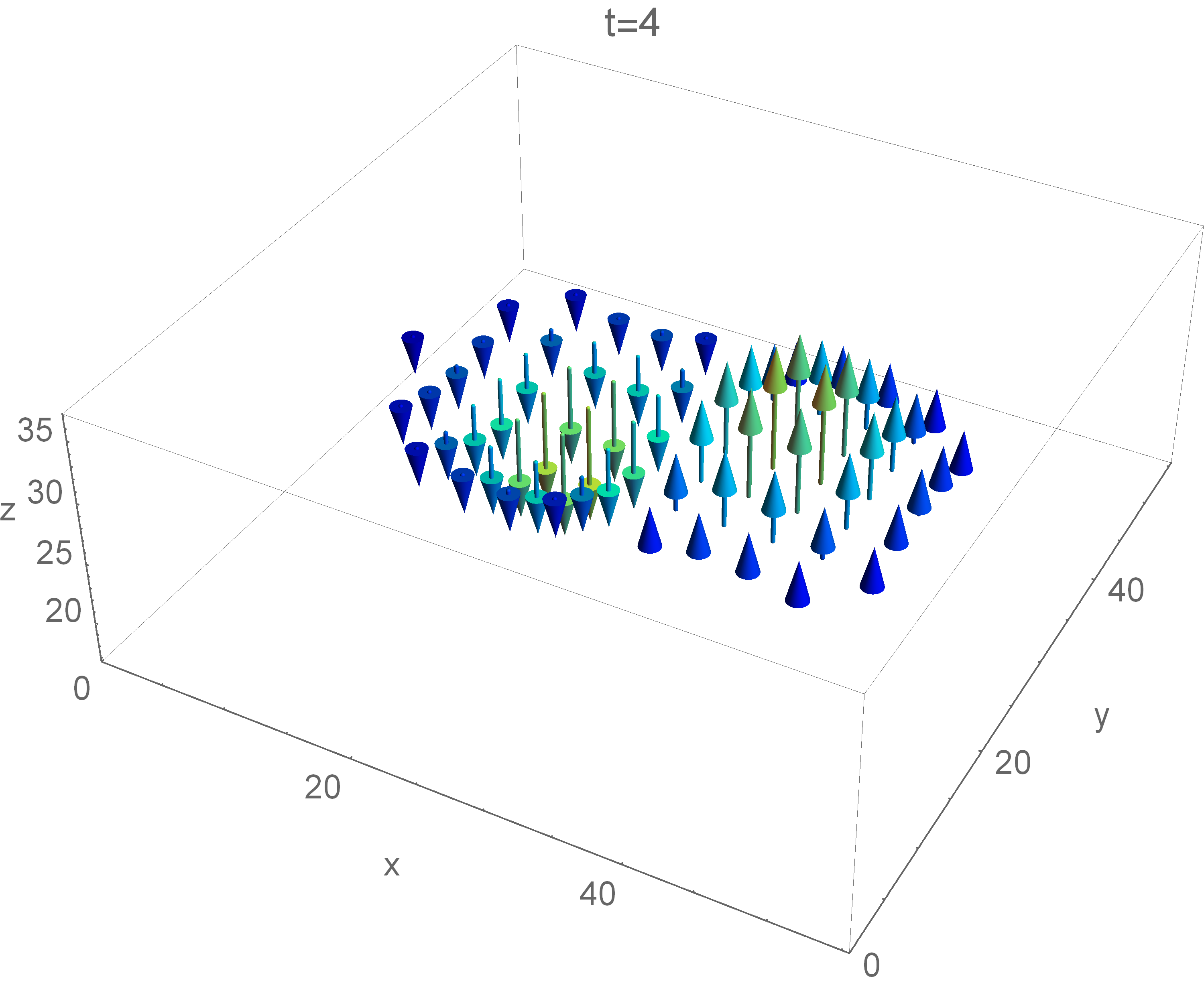}}\\
\subfloat[density matrix convergence]{
\includegraphics[width=0.3\textwidth]{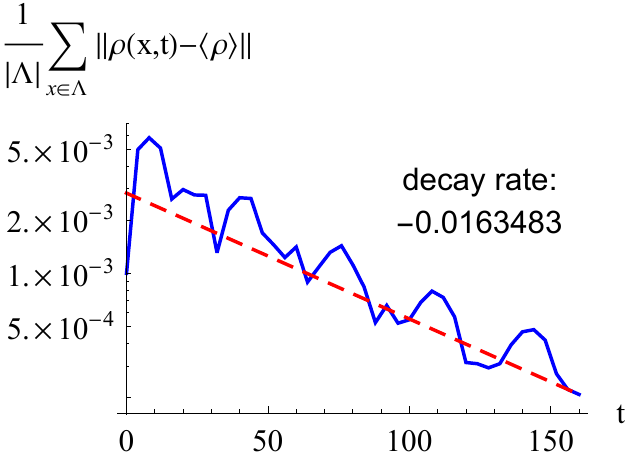}}
\hspace{0.02\textwidth}
\subfloat[velocity convergence]{
\includegraphics[width=0.3\textwidth]{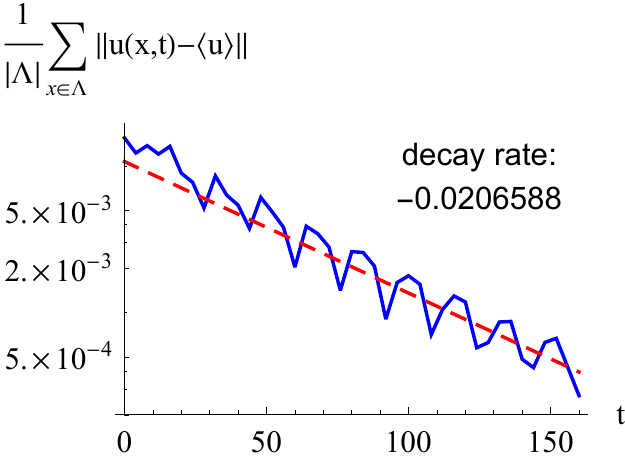}}
\hspace{0.02\textwidth}
\subfloat[total energy convergence]{
\includegraphics[width=0.3\textwidth]{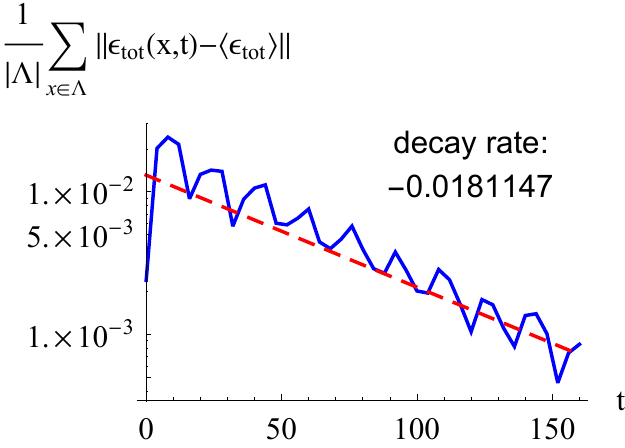}}
\caption{(Color online) Exponential convergence to equilibrium in a 3D simulation. The Bloch vectors of the initial state point along the $z$-axis. (a) shows the vectors within the central $z$-plane, after rescaling them by the factor $350$ for visual clarity, and (b) the arrow tips as smooth outline. The Bloch vectors tend towards the mean zero vector very rapidly, as illustrated in (c) after $4$ time steps only. Color in (a) and (c) encodes the length of the vectors. Bottom row: Quantification of the exponential convergence of the density matrices, velocity and total energy towards their uniform equilibrium values.}
\label{fig:Bloch3D}
\end{figure}

\paragraph{Convergence to equilibrium} The next example in Fig.~\ref{fig:Bloch3D} illustrates the convergence of the density matrices $\rho(\vect{x},t)$, velocity $\vect{u}(\vect{x},t)$ and total energy $\varepsilon_{\mathrm{tot}}(\vect{x},t)$ to the constant mean values defined in Eqs.~\eqref{eq:GlobalDensity} -- \eqref{eq:GlobalTotalEnergy}. The simulation domain consists of $32 \times 32 \times 32$ cells, the relaxation constant was set to $\tau = 10$, and the reference inverse temperature to $\beta_0 = 1$. The trace of the density matrix is chosen to be uniform, but the $z$-components of the initial Bloch vectors have a dichotomic outline, as shown in Fig.~\ref{fig:Bloch3DInitial} and \ref{fig:Bloch3DOutline}. Physically, this could correspond to neighboring regions within a solid with opposing (average) electronic spins. The initial velocity points in $x$-direction, with values proportional to the $z$-components of the Bloch vectors. As $t \to \infty$, one expects that the density, velocity and energy become uniform across the simulation domain, and thus converge to their global (equilibrium) average values. Indeed exponential convergence is confirmed numerically in the bottom row of Fig.~\ref{fig:Bloch3D}. The oscillatory pattern is likely due to waves which travel in opposite directions and repeatedly collide due to the periodic boundary conditions.

\section{Conclusions and outlook}

We have developed and implemented a lattice Boltzmann scheme complying with quantum aspects by using Fermi-Dirac equilibrium functions and taking the electronic spin explicitly into account. Still open is a quantification of the error introduced by the BGK approximation as compared to the physical collision operator \cite{BoltzmannHubbard2012}, as well as a calibration of the relaxation time $\tau$. There are also many desirable features left for future work: in regard of physical applications, an external magnetic field should be included, and various boundary conditions (other than the periodic conditions used in the present study) and additional cell types (like ``obstacle'' cells appearing in classical LBM) could be incorporated. Concerning the first point, a magnetic field adds a local Vlasov-type commutator
\begin{equation}
\label{eq:sigmaB_comm}
-i \left[ \mu \,\vect{\sigma} \cdot \vect{B}(\vect{x},t), W(\vect{p},\vect{x},t) \right]
\end{equation}
to the right side of Eq.~\eqref{eq:BoltzmannEquation}, where $\mu$ is the magnetic moment, $\vect{\sigma} = (\sigma_x, \sigma_y, \sigma_z)$ the Pauli matrices and $\vect{B}(\vect{x},t)$ the magnetic field. The term \eqref{eq:sigmaB_comm} can simply be added to the collision operator in the implementation.


\paragraph{Acknowledgments}: I'd like to thank Martin F\"urst, Hans Hiptmair and Herbert Spohn for many helpful discussions.

\newpage

\appendix

\section{Coefficients of the polynomial equilibrium functions}

For $d = 2$, a solution for the coefficients in Eq.~\eqref{eq:WeqPoly} reads
\begin{align*}
\alpha_1 &= \alpha_{\text{fac}} \left(3\,\zeta(3) - \frac{\pi^2}{6} \beta_0\,\varepsilon_{\mathrm{FD}}\right),\\
\alpha_2 &= \alpha_{\text{fac}} \left(2 \log(2) \beta_0\,\varepsilon_{\mathrm{FD}} - \frac{\pi^2}{6}\right),\\
\alpha_3 &= \frac{12 \log(2)}{\pi^2}, \\
\alpha_4 &= \alpha_{\text{fac}} \log(2), \quad \alpha_5 = -\alpha_{\text{fac}} \frac{\pi^2}{6},\\
\alpha_{\text{fac}} &= \frac{72 \log(2)}{216\log(2)\,\zeta(3) - \pi^4} \doteq 0.60447
\end{align*}
with $\varepsilon_{\mathrm{FD}}$ defined in Eq.~\eqref{eq:energyFDstdOmega} for $d = 2$ and $\zeta(s)$ the Riemann zeta function.

\medskip

Similarly, for dimension $d = 3$ one arrives at the solution
\begin{align*}
\alpha_1 &= \alpha_{\text{fac}} \left(\frac{5}{4} \left(9 - 5\sqrt{2}\right) \zeta\!\left(\tfrac{7}{2}\right) - \left(5 - 3\sqrt{2}\right) \zeta\!\left(\tfrac{5}{2}\right) \beta_0\,\varepsilon_{\mathrm{FD}}\right),\\
\alpha_2 &= \alpha_{\text{fac}} \left( \left(6 - 4 \sqrt{2}\right) \zeta\!\left(\tfrac{3}{2}\right) \tfrac{2}{3} \beta_0\,\varepsilon_{\mathrm{FD}} - \left(5 - 3\sqrt{2}\right) \zeta\!\left(\tfrac{5}{2}\right)\right), \\
\alpha_3 &= \frac{2\,\zeta\!\left(\frac{3}{2}\right)}{\left(3+\sqrt{2}\right) \zeta\!\left(\frac{5}{2}\right)}, \\
\alpha_4 &= \alpha_{\text{fac}} \left(3 - 2\sqrt{2}\right) \zeta\!\left(\tfrac{3}{2}\right), \quad
\alpha_5 = -\alpha_{\text{fac}} \left(5 - 3\sqrt{2}\right) \zeta\!\left(\tfrac{5}{2}\right),\\
\alpha_{\text{fac}} &= \frac{4\,\zeta\!\left(\frac{3}{2}\right)}{5 \left(9 - 5\sqrt{2}\right) \zeta\!\left(\frac{3}{2}\right) \zeta\!\left(\frac{7}{2}\right) - 3 \left(9 - 4\sqrt{2}\right) \zeta\!\left(\frac{5}{2}\right)^2} \doteq 1.01062
\end{align*}
with $\varepsilon_{\mathrm{FD}}$ defined in Eq.~\eqref{eq:energyFDstdOmega} for $d = 3$.

\newpage


\end{document}